\DeclareMathOperator{\divergence}{div}
\DeclareMathOperator{\Tr}{Tr}
\newcommand{\bydefinition}{\mathrm{def}}
\newcommand{\traceless}[1]{{#1}_{\delta}}
\newcommand{\diff}{\mathrm{d}}
\renewcommand{\vec}[1]{\ensuremath{\mathbf{#1}}}
\renewcommand{\vec}[1]{\ensuremath{\bm{#1}}}%
\newcommand{\tensorq}[1]{\ensuremath{\mathbb{#1}}}      
\newcommand{\tensorc}[1]{\ensuremath{\mathrm{#1}}}      
\newcommand{\transpose}[1]{#1^\top}
\newcommand{\inverse}[1]{#1^{-1}}
\newcommand{\identity}{\ensuremath{\tensorq{I}}}
\newcommand{\cstress}{\tensorq{T}}
\newcommand{\cstressc}{\tensorc{T}}
\newcommand{\generictensor}{{\tensorq{A}}}
\newcommand{\dcstresssymb}{\traceless{\cstress}}
\newcommand{\vecv}{\ensuremath{\vec{v}}}
\newcommand{\gradv}{\ensuremath{\nabla \vecv}}
\newcommand{\gradsym}{\ensuremath{\tensorq{D}}}
\newcommand{\bvec}[1]{\vec{e}_{#1}} 
\newcommand{\bvecx}{\bvec{\hat{x}}}
\newcommand{\bvecy}{\bvec{\hat{y}}}
\newcommand{\bvecz}{\bvec{\hat{z}}}
\newcommand{\vhatp}[1][\vecvc]{{#1}^{\hat{\varphi}}}
\newcommand{\hatx}{\hat{x}}
\newcommand{\haty}{\hat{y}}
\newcommand{\cobvecn}[1]{\vec{g}_{\hat{#1}}} 
\newcommand{\R}{\ensuremath{{\mathbb R}}}
\newcommand{\pd}[2]{\ensuremath{\frac{\partial {#1}}{\partial {#2}}}}
\newcommand{\dd}[2]{\ensuremath{\frac{\diff {#1}}{\diff {#2}}}}
\newcommand{\norm}[2][]{\ensuremath{\left\|#2\right\|_{#1}}}
\newcommand{\absnorm}[1]{\ensuremath{\left|#1\right|}}
\newcommand{\cvolumee}{\diff \mathrm{v}}
\newcommand{\vectordot}[2]{\ensuremath{#1 \bullet #2}}
\newcommand{\tensorf}[1]{{\mathfrak{#1}}}
\def\X {{\boldsymbol X}}
\def\R {\mathbb{R}}
\def \beq{\begin{equation}}
\def \eeq{\end{equation}}
\def \ba{\begin{array}}
\def \ea{\end{array}}
\title[Numerical scheme for non-monotone constitutive relations]{Numerical scheme for simulation of transient flows of non-Newtonian fluids characterised by a non-monotone relation between the symmetric part of the velocity gradient and the Cauchy stress tensor}
\author{Adam Jane\v{c}ka}
\email{janecka@karlin.mff.cuni.cz}
\address{
  Faculty of Mathematics and Physics\\
  Charles University\\
  Sokolovsk\'a 83\\
  Praha 8 -- Karl\'{\i}n\\
  CZ 186\;75\\
  Czech Republic
}
\author{Josef M\'{a}lek}
\email{malek@karlin.mff.cuni.cz}
\address{
  Faculty of Mathematics and Physics\\
  Charles University\\
  Sokolovsk\'a 83\\
  Praha 8 -- Karl\'{\i}n\\
  CZ 186\;75\\
  Czech Republic
}
\author{V\'{\i}t Pr\r{u}\v{s}a}
\email{prusv@karlin.mff.cuni.cz}
\address{
  Faculty of Mathematics and Physics\\
  Charles University\\
  Sokolovsk\'a 83\\
  Praha 8 -- Karl\'{\i}n\\
  CZ 186\;75\\
  Czech Republic
}
\author{Giordano Tierra}
\email{gtierra@temple.edu}
\address{
  Department of Mathematics\\
  Temple University\\ 
  Philadelphia, PA 19122\\
  United States of America\\
}
\date{\today}
\keywords{non-Newtonian fluids, implicit constitutive relations, non-monotone constitutive relations, unsteady flow, finite element method}
\subjclass[2000]{%
  76D99, 74A20, 65M60}
\thanks{This research was supported by ERC-CZ project LL1202 funded by Ministry of Education, Youth and Sports of the Czech Republic. Josef M\'alek and V\'{\i}t Pr\r{u}\v{s}a acknowledge the support of the Czech Science Foundation project 18-12719S. Adam Jane\v{c}ka acknowledges the support of project 260449/2018 ``Student research in the field of physics didactics and mathematical and computer modelling''}
\numberwithin{equation}{section}
\let\cite\citet
\newtheorem{lemma}{Lemma}
\begin{document}

\begin{abstract}
  We propose a numerical scheme for simulation of transient flows of incompressible non-Newtonian fluids characterised by a non-monotone relation between the symmetric part of the velocity gradient (shear rate) and the Cauchy stress tensor (shear stress). The main difficulty in dealing with the governing equations for flows of such fluids is that the non-monotone constitutive relation allows several values of the stress to be associated with the same value of the symmetric part of the velocity gradient. This issue is handled via a reformulation of the governing equations. The equations are reformulated as a system for the triple pressure-velocity-apparent viscosity, where the apparent viscosity is given by a scalar implicit equation. We prove that the proposed numerical scheme has---on the discrete level---a solution, and using the proposed scheme we numerically solve several flow problems.


\end{abstract}

\maketitle

\addtocontents{toc}{\protect\begin{multicols}{2}} 
\tableofcontents


\section{Introduction}
\label{sec:introduction}
The response of non-Newtonian fluids is mathematically described in terms of a constitutive relation that links the Cauchy stress tensor~$\cstress$ and kinematical variables such as the symmetric part of the velocity gradient~$\gradsym$. In the case of incompressible non-Newtonian fluids, the stress is decomposed to the traceless part $\traceless{\cstress} = _{\bydefinition} \cstress - \frac{1}{3} \Tr \cstress$ and the spherical part $-p\identity$,
\begin{equation}
  \label{eq:1}
  \cstress = - p\identity + \traceless{\cstress},
\end{equation}
and a specific non-Newtonian fluid is usually characterised by an explicit constitutive equation of the type
\begin{equation}
  \label{eq:2}
  \traceless{\cstress} = \tensorf{f}(\gradsym),
\end{equation}
where $\tensorf{f}$ is a \emph{monotone} tensorial function. However, adequate description of the response of various non-Newtonian fluids requires one to consider \emph{non-monotone} tensorial functions $\tensorf{f}$ in~\eqref{eq:2}, see for example~\cite{david.j.filip.p:phenomenological} and \cite{galindo-rosales.fj.rubio-hernandez.fj.ea:apparent}. The non-monotone response also seems to be crucial in modelling complex non-Newtonian phenomena such as \emph{shear banding}, see \cite{fardin.ma.ober.tj.ea:potential} or \cite{divoux.t.fardin.ma.ea:shear}.

More importantly, several fluids have been reported to exhibit the behaviour that \emph{do not fit} into the framework~\eqref{eq:2}, but one can still formulate the constitutive relation as an \emph{algebraic relation} between $\traceless{\cstress}$ and $\gradsym$. In particular, constitutive relations for these fluids can take the form
\begin{equation}
  \label{eq:3}
  \gradsym = \tensorf{g}(\traceless{\cstress}),
\end{equation}
where $\tensorf{g}$ can be again a \emph{non-monotone} function. Moreover, one can also think about general implicit relations of the type
\begin{equation}
  \label{eq:22}
  \tensorf{h}(\gradsym, \traceless{\cstress}) = \tensorq{0},
\end{equation}
where $\tensorf{h}$ is a tensorial function. One of the very first observations of the fluid response that could be characterised by~\eqref{eq:3} is due to~\cite{boltenhagen.p.hu.y.ea:observation}, and the amount of experimental or theoretical works concerning the non-monotonous response of the type~\eqref{eq:3} has been growing since then, see for example~\cite{perlacova.t.prusa.v:tensorial}, \cite{janecka.a.prusa.v:perspectives}, \cite{rajagopal.kr.saccomandi.g:novel} or \cite{janecka.a.pavelka.m:non-convex} for further references.  

Typically, the non-monotone behaviour of $\tensorf{g}$ exhibits itself as an S-shaped curve in shear stress/shear rate plot, see Figure~\ref{fig:sshaped}. If one wants to avoid ``multivalued'' relations of the type~\eqref{eq:2}, then is clear that the non-monotonicity of $\tensorf{g}$ prevents one to invert~\eqref{eq:3}, and write constitutive relation~\eqref{eq:3} in the form~\eqref{eq:2}. Consequently, the class of fluids with constitutive relation of the type~\eqref{eq:3} substantially differs from the class of constitutive relations of the type~\eqref{eq:2}.

\begin{figure}[ht]
  \centering
  \subfloat[Standard way of thinking about constitutive relation, $\traceless{\cstress} = \tensorf{f}(\gradsym)$. Shear stress is sought as a function of shear rate.]{\includegraphics[width=0.35\textwidth]{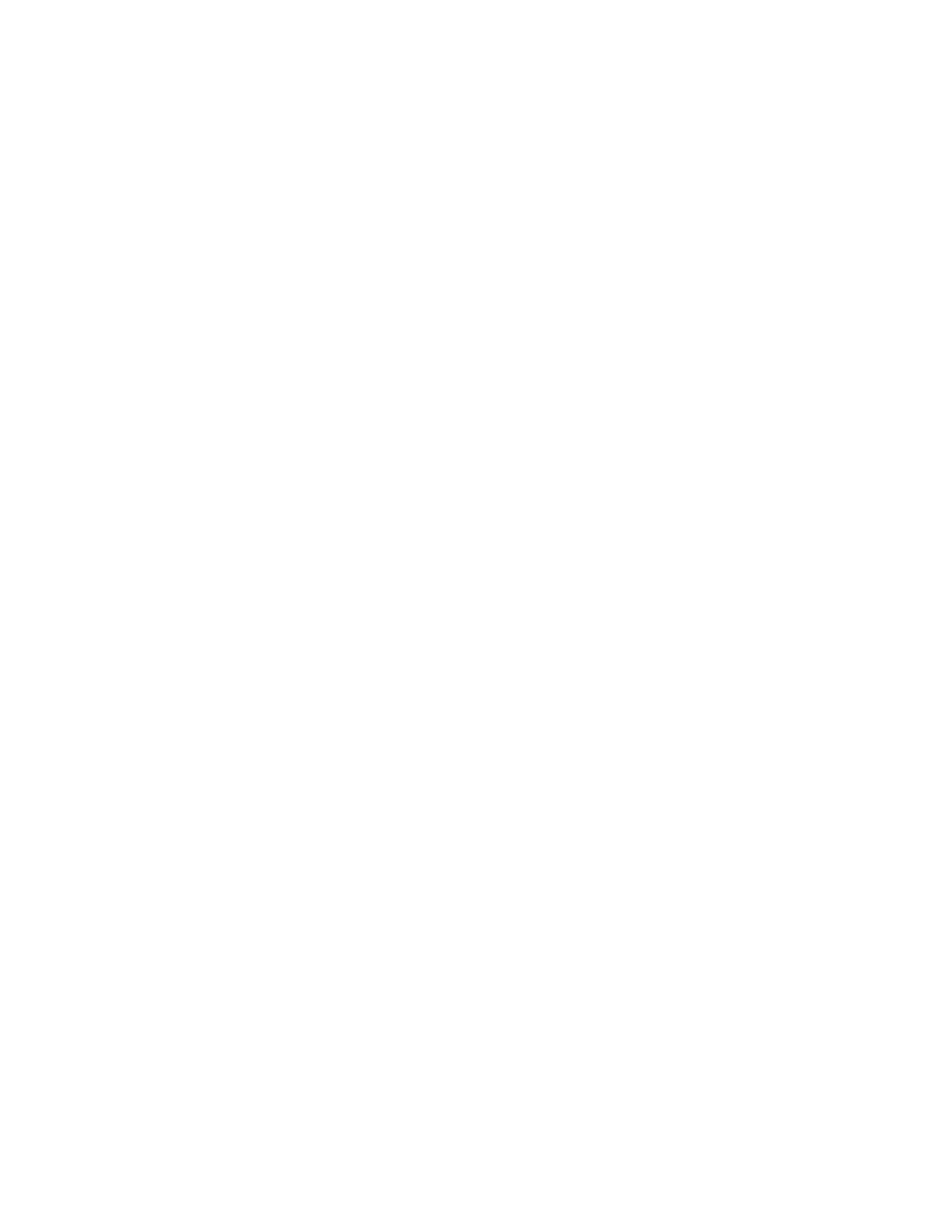}}
  \qquad
  \subfloat[Alternative way of thinking about constitutive relations, $\gradsym = \tensorf{g}(\traceless{\cstress})$. Shear rate is sought as a function of shear stress.]{\includegraphics[width=0.35\textwidth]{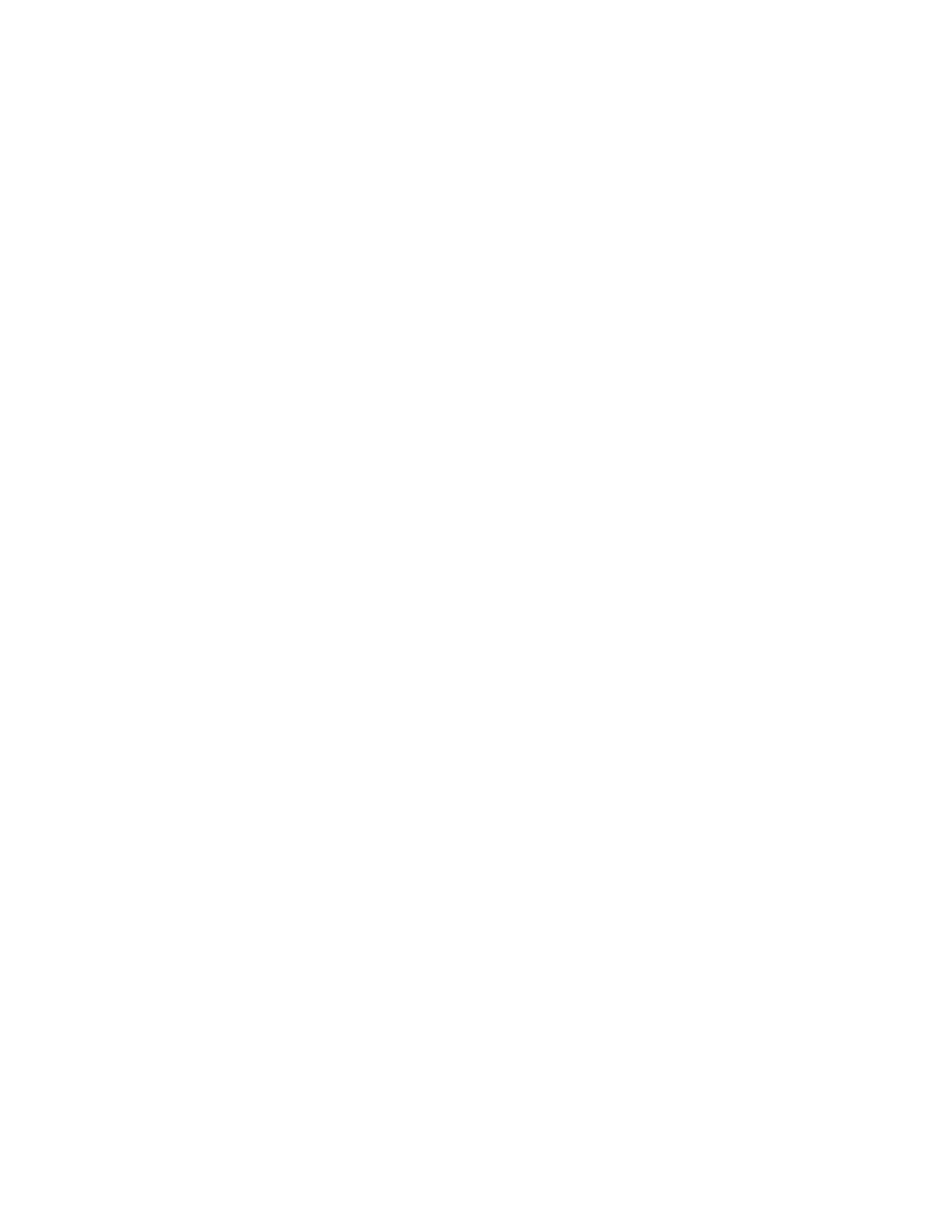}}
  \caption{S-shaped curve in shear rate/shear stress plot and shear stress/shear rate plot. In the standard way of thinking about constitutive relation, shear stress in not a function of shear rate. If the axes are rotated and one plots shear stress versus shear rate, then the shear rate is a function of the shear stress. The formulation of the constitutive relation as~\eqref{eq:3} instead of~\eqref{eq:2} is clearly more suitable.}
  \label{fig:sshaped}
\end{figure}

Flows of fluids with a non-monotone constitutive relation of the type~\eqref{eq:3} have been to our best knowledge investigated only in special geometries, where the corresponding system of governing equations reduces to a system of ordinary differential equations, see for example~\cite{malek.j.prusa.ea:generalizations}, \cite{roux.c.rajagopal.kr:shear}, \cite{narayan.spa.rajagopal.kr:unsteady}, \cite{srinivasan.s.karra.s:flow}, \cite{mohankumar.kv.kannan.k.ea:exact} and~\cite{fusi.l.farina.a:flow}. 
\emph{However, if one needs to investigate flows in more complex geometries, a suitable numerical scheme for solution of transient flow problems must be developed. Our aim is to address this issue.}

A particular constitutive relation that falls into the class~\eqref{eq:3} is the constitutive relation\footnote{The norm of a tensorial quantity $\generictensor$ is defined as the standard Frobenius norm, $\absnorm{\generictensor} =_{\bydefinition} \left( \Tr \left(\generictensor \transpose{\generictensor} \right))\right)^{\frac{1}{2}}$.}
\begin{equation}
  \label{eq:4}
  \gradsym = \left[\alpha \left(1 + \beta \absnorm{\traceless{\cstress}}^2 \right)^s + \gamma \right] \traceless{\cstress},
\end{equation}
that has been introduced by~\cite{roux.c.rajagopal.kr:shear}, see also~\cite{malek.j.prusa.ea:generalizations}. Symbols $\alpha$, $\beta$ denote positive constants, $\gamma$ is a nonnegative constant, and the exponent $s$ is a constant. If $s<-\frac{1}{2}$, then one can obtain, in general, a non-monotone response, see Figure~\ref{fig:consittutive-relation-scheme-a}, hence one is mainly interested in these values of the exponent $s$. (See \cite[Lemma 2.1]{roux.c.rajagopal.kr:shear} for a quantification of parameters range that lead to a non-monotone response.) Clearly, the development of a numerical scheme dealing with the simple constitutive relation~\eqref{eq:4} for $s<-\frac{1}{2}$ is a necessary step in the development of numerical schemes for more complex constitutive relations that belong into the class~\eqref{eq:3}.

\begin{figure}[ht]
  \centering
  \subfloat[\label{fig:consittutive-relation-scheme-a} Relation between the norms $\absnorm{\traceless{\cstress}}$ and $\absnorm{\gradsym}$.]{\includegraphics[width=0.45\textwidth]{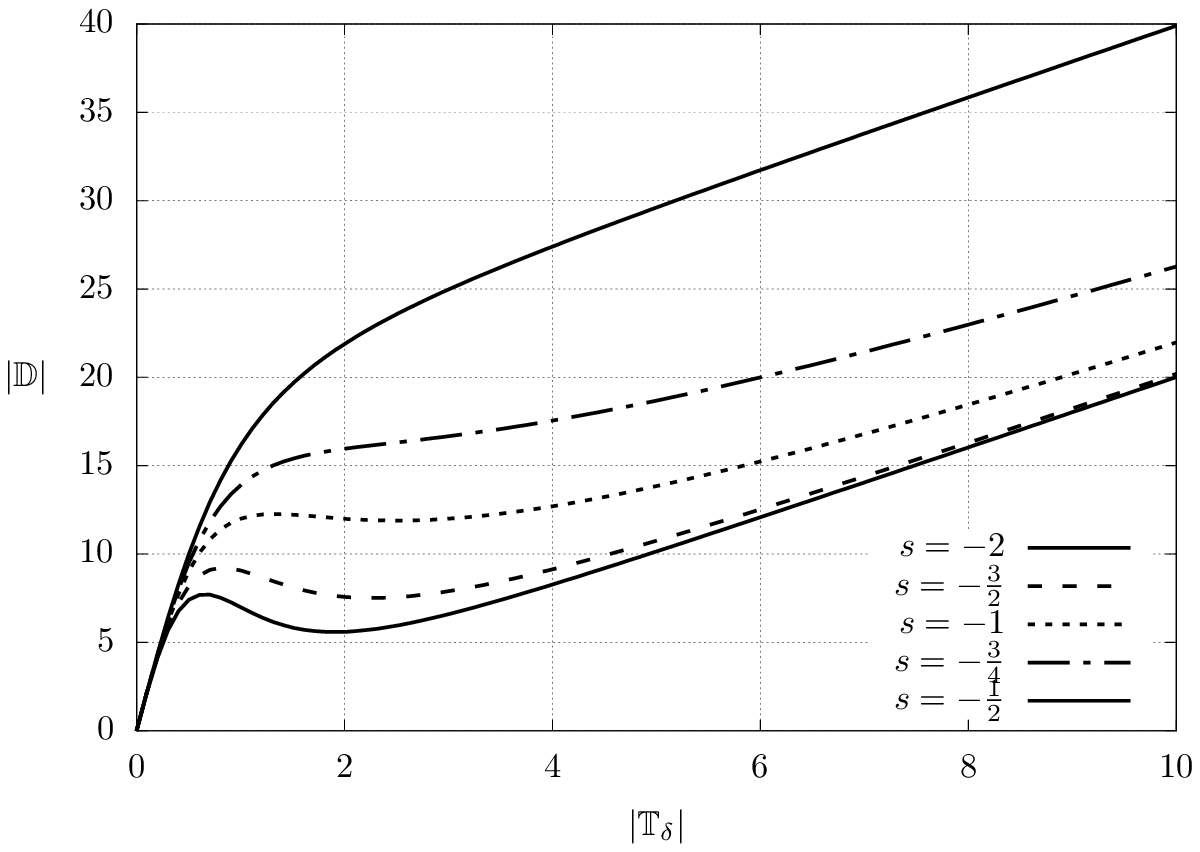}}
  \qquad
  \subfloat[\label{fig:consittutive-relation-scheme-b} Apparent viscosity $\tilde{\mu}$.]{\includegraphics[width=0.45\textwidth]{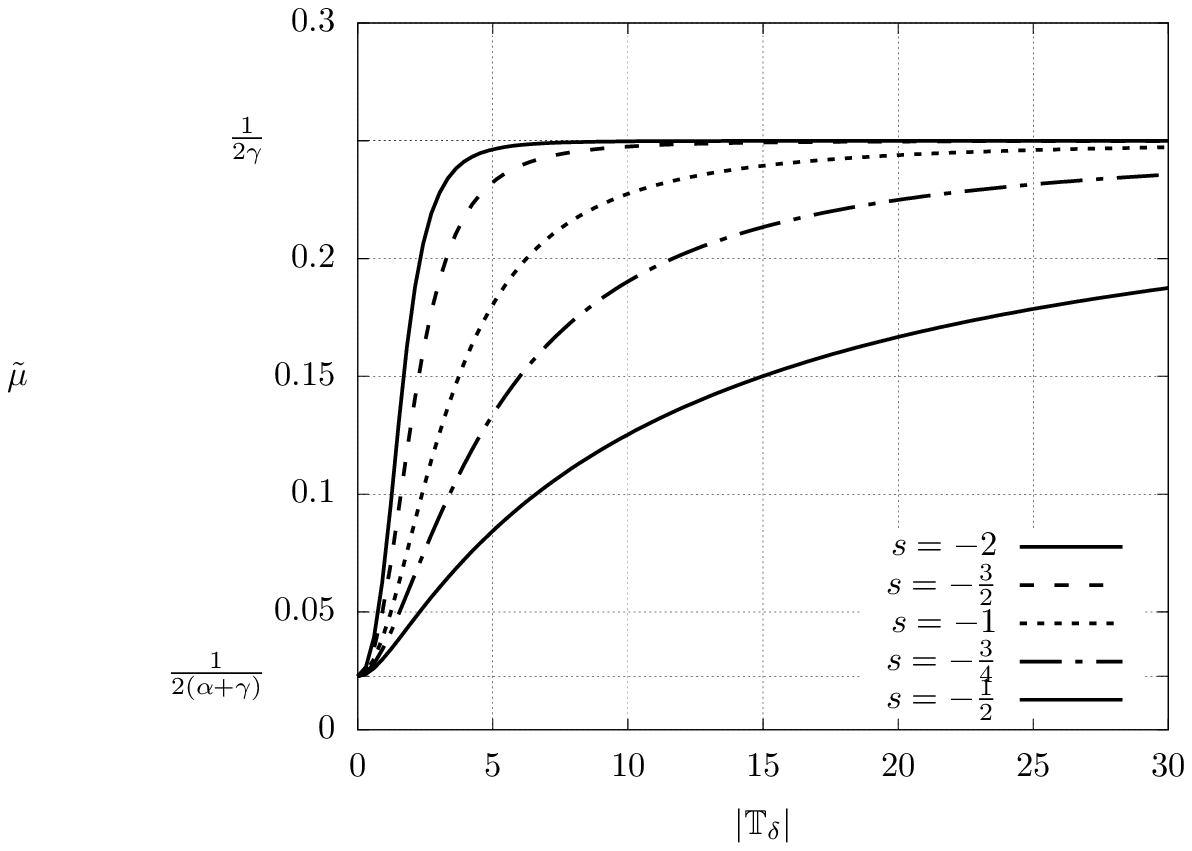}}
  \caption{Constitutive relation $\gradsym = [\alpha (1 + \beta \absnorm{\traceless{\cstress}}^2)^s + \gamma] \traceless{\cstress}$. Parameter values set to $\alpha=20$, $\beta=1$, $\gamma=2$. The exponent $s$ takes values $s \in \left\{-2, -\frac{3}{2}, -1, -\frac{3}{4}, -\frac{1}{2} \right\}$. The constitutive curve is non-monotone provided that $s < - \frac{1}{2}$ and $\frac{\gamma}{\alpha} < 2 \left(\frac{2s - 2}{2s + 1}\right)^{s-1}$, see~\cite{roux.c.rajagopal.kr:shear}. Consequently, for the given $\alpha$, $\beta$ and $\gamma$, the curve is non-monotone if, approximately, $s < -0.884341$.}
  \label{fig:consittutive-relation-scheme}
\end{figure}

The corresponding system of governing equations for an incompressible non-Newtonian fluid specified by constitutive relation~\eqref{eq:4} reads
\begin{subequations}
  \label{eq:5}
  \begin{align}
    \label{eq:7}
    \divergence \vec{v} &= 0, \\
    \label{eq:8}
    \rho \dd{\vec{v}}{t} &= - \nabla p + \divergence \traceless{\cstress} + \rho \vec{b}, \\
    \label{eq:9}
    \gradsym &= \left[\alpha \left(1 + \beta \absnorm{\traceless{\cstress}}^2 \right)^s + \gamma \right] \traceless{\cstress},
  \end{align}
\end{subequations}
where $\rho$ denotes the constant density, $\vec{v}$ the Eulerian velocity field, $\vec{b}$ the external body forces, and $\dd{}{t}$ stands for the material time derivative. The first difficulty in solving~\eqref{eq:5} for $s<-\frac{1}{2}$ is the fact that one can not, in general, invert~\eqref{eq:9} and express $\traceless{\cstress}$ as a function of $\gradsym$. (The constitutive curve is non-monotone provided that $s < -\frac{1}{2}$, and that the other parameter values satisfy inequality $\frac{\gamma}{\alpha} < 2 \left(\frac{2s - 2}{2s + 1}\right)^{s-1}$, see~\cite{roux.c.rajagopal.kr:shear}.) Consequently, system~\eqref{eq:5} can not be, in general, formulated as a system for the pressure-velocity pair  $(p, \vec{v})$. However, this is just a matter of a reformulation of the problem as a problem for the pressure-velocity-stress triple $(p, \vec{v}, \traceless{\cstress})$.

The key conceptual difficulty is the treatment of the constitutive relation~\eqref{eq:9}. The reason is that~\eqref{eq:9} admits for $s< -\frac{1}{2}$ multiple values of $\traceless{\cstress}$ to be associated with the same value of the symmetric part of the velocity gradient $\gradsym$. \emph{In what follows we focus exclusively on this most interesting case, that is we consider only $s < -\frac{1}{2}$}.

Unfortunately, the multiplicity issue prevents one from using most of the already available analytical and numerical results regarding initial and boundary value problems for systems of the type
\begin{subequations}
  \label{eq:10}
  \begin{align}
    \label{eq:11}
    \divergence \vec{v} &= 0, \\
    \label{eq:12}
    \rho \dd{\vec{v}}{t} &= - \nabla p + \divergence \traceless{\cstress} + \rho \vec{b}, \\
    \label{eq:13}
    \tensorf{f}(\traceless{\cstress}, \gradsym) &= \tensorq{0},
  \end{align}
\end{subequations}
where $\tensorf{f}(\traceless{\cstress}, \gradsym)$ is an implicit function. The available results, see \cite{bulcek.m.gwiazda.p.ea:on*2,bulcek.m.gwiazda.p.ea:on*3,bulcek.m.gwiazda.p.ea:on*4} and also~\cite{maringova.e.zabensky.j:on} for the proof of long-time and large-data existence of weak solution to~\eqref{eq:10} and similar systems, and \cite{stebel.j:finite}, \cite{diening.l.kreuzer.c.ea:finite} and~\cite{suli.e.tscherpel.t:fully} for the results concerning the discretised counterparts of~\eqref{eq:10}, are based on the fact that the equation $\tensorf{f}(\traceless{\cstress}, \gradsym) = \tensorq{0}$ defines a \emph{maximal monotone graph}. Although the maximal \emph{monotone} graph defined by~$\tensorf{f}$ can be possibly multivalued, such as in the case of Bingham fluid, see~\cite{bulcek.m.gwiazda.p.ea:on*3} and~\cite{hron.j.malek.j.ea:novel}, systems of the type~\eqref{eq:5} with \emph{non-monotone} response, that is~\eqref{eq:9} with $s < - \frac{1}{2}$, are not covered by the otherwise very general theory by~\cite{bulcek.m.gwiazda.p.ea:on*2,bulcek.m.gwiazda.p.ea:on*3}.

The numerical scheme for solution of~\eqref{eq:5} introduced below \emph{represents the first attempt to study systems of type~\eqref{eq:5}}. The proposed scheme does not fully answer the question on the existence of a solution to~\eqref{eq:5}, yet an important step is made. Namely, a discrete finite-dimensional nonlinear system that arises in the time-stepping of system~\eqref{eq:5} is shown to be solvable.

The work is organized as follows. In Section~\ref{sec:model} we reformulate system~\eqref{eq:5} as a nonlinear system for the pressure-velocity-apparent viscosity triple $(p, \vec{v}, \mu)$. The reformulation is the key step in the derivation of the numerical scheme. The tensorial constitutive relation~\eqref{eq:4} is effectively replaced by an implicit relation for a scalar quantity---the apparent viscosity---and the arising system shares some similarities with the standard Navier--Stokes system. Moreover, the apparent viscosity function introduced by this reformulation is monotone and bounded which allows one to obtain \emph{a priori} estimates.

Then, in Section~\ref{sec:scheme} we describe a numerical scheme for the solution of the governing equations, and we show that there exists a solution to the discretised counterpart of the governing equations.

In order to study the dynamical behaviour implied by the non-monotone constitutive relations, we introduce, see Section~\ref{sec:uq}, a reduced version of the problem. In the reduced problem we neglect the contributions from other effects like pressure (incompressibility) or convection, and we focus solely on the constitutive relation. (Note that the reduced problem can be seen as a heat conduction problem with non-monotonous heat flux versus temperature gradient constitutive relation, see~\cite{janecka.a.pavelka.m:non-convex} and references therein.) We introduce a variant of the proposed numerical scheme for the reduced problem, and we present several numerical experiments that document the behavior of the reduced system.

In Section~\ref{sec:simulations} we move forward and we solve the full problem~\eqref{eq:5}. Using the proposed numerical scheme we perform numerical experiments in two settings. First, we investigate the flow in the cylindrical Couette geometry, and, second, we investigate the flow in a channel with a narrowing. Finally, the conclusions of our work are stated in Section~\ref{sec:conclusion}.


\section{Reformulation of the problem in terms of apparent viscosity}
\label{sec:model}
Let us consider a bounded domain $\Omega\subset \R^d$ with $(d = 2, 3)$. For the sake of simplicity of the discussion let us further assume that no external body force is present, $\vec{b}=\vec{0}$. Then system~\eqref{eq:5} reads
\begin{subequations}
  \label{eq:16}
  \begin{align}
    \label{eq:17}
    \divergence \vecv &= 0, \\
    \label{eq:18}
    \rho \dd{\vecv}{t} &= - \nabla p + \divergence \dcstresssymb, \\
    \label{eq:19}
    \gradsym &= \left[\alpha \left(1 + \beta \absnorm{\dcstresssymb}^2 \right)^s + \gamma \right] \dcstresssymb,
  \end{align}
\end{subequations}
where the standard notation $\dd{\vecv}{t} = _{\bydefinition} \pd{\vecv}{t} + \left(\vectordot{\vecv}{\nabla}\right) \vecv$, $\gradsym =_{\bydefinition} \frac{1}{2} \left( \gradv + \transpose{\left( \gradv \right)} \right)$, $\absnorm{\dcstresssymb} =_{\bydefinition} \Tr \left( \dcstresssymb \transpose{ \dcstresssymb} \right)^{\frac{1}{2}}$ has been used. Note that since the trace of $\dcstresssymb$ is by definition equal to zero, the constitutive relation~\eqref{eq:19} in fact already enforces the incompressibility constraint~\eqref{eq:17}. We shall however keep~\eqref{eq:17} in the system, since the constitutive relation~\eqref{eq:19} will be soon reformulated. The price paid for the reformulation is that the identity~\eqref{eq:17} that otherwise automatically follows from~\eqref{eq:19} must be kept in the reformulated system in an explicit way. 

The system is supplemented with the initial and boundary conditions in the form
\begin{subequations}
  \label{eq:6}
  \begin{align}
    \label{eq:14}
    \left. \vecv(\vec{x}, t) \right|_{t=0}         &=\vecv_0(\vec{x}), \\
    \label{eq:15}
    \left. \vecv(\vec{x}, t) \right|_{\partial\Omega} &= \vec{0}. 
  \end{align}
\end{subequations}
(The zero Dirichlet boundary condition~\eqref{eq:15} is used in mathematical analysis of the the governing equations. In the numerical simulations we use a non-zero Dirichlet boundary condition.) The aim is to find the triple $(p, \vecv, \dcstresssymb)$ such that it solves~\eqref{eq:16} subject to~\eqref{eq:6}. 

Introducing the apparent viscosity $\mu$ by the formula
\begin{equation}
  \label{eq:20}
  \mu (\vec{x},t)=_{\bydefinition} \frac{1}{2} \frac{\absnorm{\dcstresssymb (\vec{x},t)}}{\absnorm{\gradsym (\vec{x},t)}},
\end{equation}
we see that the apparent viscosity $\mu(\vec{x},t)$ can be rewritten as a function of the traceless part of the Cauchy stress tensor, that is $\mu(\vec{x},t) = \tilde{\mu}(\absnorm{\dcstresssymb(\vec{x},t)})$, where
\begin{equation}
  \label{def:viscosity}
  \tilde{\mu}(u)=_{\bydefinition} \frac{1}{2} \inverse{\left[\alpha\left(1 + \beta u^2\right)^s + \gamma\right]}.
\end{equation}
Note that if $s<-1/2$, then the apparent viscosity $\tilde{\mu}$ introduced in \eqref{def:viscosity} is a positive \emph{increasing and bounded function} of $u$ satisfying for all $u \in [0,+\infty)$ inequalities
\begin{equation}
  \label{eq:21}
  \frac1{2(\gamma+\alpha)} \leq \tilde{\mu}(u) \leq \frac1{2\gamma},
\end{equation}
see Figure~\ref{fig:consittutive-relation-scheme-b}. Further, constitutive relation~\eqref{eq:19} can be rewritten as $\traceless{\cstress} = \tilde{\mu}( \absnorm{\traceless{\cstress}}) \gradsym$, which implies that system~\eqref{eq:16} can be reformulated as a system
\begin{subequations}
  \label{model2}
  \begin{align}
    \label{eq:24}
    \divergence \vecv &= 0, \\
    \label{eq:25}
    \rho \dd{\vecv}{t} &= - \nabla p + \divergence \left(2 \mu \gradsym \right), \\
    \label{eq:26}
    \mu &= \frac{1}{2} \inverse{\left[\alpha\left(1 + 4 \beta \mu^2 \absnorm{\gradsym}^2\right)^s + \gamma\right]}
  \end{align}
\end{subequations}
for the triple $(p, \vecv, \mu)$. Note that the last equation is an implicit equation for the apparent viscosity $\mu$ in terms of $\gradsym$. This reformulation is a useful one, since the implicit equation is now an equation for a \emph{scalar} variable, while in~\eqref{eq:16} the implicit equation is an equation for a \emph{tensorial} variable.

At this stage, we can observe that taking the scalar product of~\eqref{eq:25} with~$\vecv$ in \eqref{eq:25} and using the incompressibility constraint~\eqref{eq:24}, we formally arrive to the identity
\begin{equation}
  \label{eq:39}
  \pd{}{t}\left( \frac{1}{2} \rho \absnorm{\vecv}^2 \right)
  +
  \divergence
  \left[
    \left(
      p
      +
      \frac{1}{2} \rho \absnorm{\vecv}^2
    \right)
    \vec{v}
    -
    2 \mu \gradsym
    \vec{v}
  \right]
  =
  -
  2 \mu \absnorm{\gradsym}^2.
\end{equation}
If the boundary condition reads $\left. \vec{v} \right|_{\Omega} = \vec{0}$, that is if the system is mechanically isolated, then the integration of~\eqref{eq:39} over $\Omega$ and the application of the Stokes theorem leads us to the formal balance of mechanical energy
\begin{equation}
  \label{contenergylaw}
  \dd{}{t} \left( \frac{1}{2} \int_\Omega \rho \absnorm{\vecv}^2 \, \cvolumee \right) + 2 \int_\Omega \mu \absnorm{\gradsym}^2 \, \cvolumee = 0.
\end{equation}
In particular, \eqref{contenergylaw}) implies that the kinetic energy is indeed dissipated. Further, due to \eqref{eq:21}, we see that~\eqref{contenergylaw} also implies \emph{a priori} estimates analogous to that available for the standard Navier-Stokes system,
\begin{equation}
  \label{eq:40}
  \vecv \in L^{\infty} (0,T;L^2(\Omega)^d) \cap L^2(0,T;W_0^{1,2}(\Omega)^d).
\end{equation}
Concerning the pressure, one expects at least for spatially periodic problem or for any kind of slip conditions, see~\cite{bulmal}, to recover the same estimates obtained in the theory of the Navier-Stokes equations, namely
\begin{equation}
  \label{eq:41}
   p \in L^2(0,T;L^2(\Omega)) \quad\text{if } d=2, \qquad \qquad
   p \in L^{5/3}(0,T;L^{5/3}(\Omega)) \quad\text{if } d=3,
 \end{equation}
where the pressure $p$ is suitably normalised, for example via the condition $\int_{\Omega} p \, \cvolumee = 0$.


\section{Numerical scheme}
\label{sec:scheme}
The aim is to propose a numerical scheme that can be used in solving~\eqref{model2} by finite elements in space and finite differences in time. We exploit the fact that unlike in~\eqref{eq:19} the implicit constitutive relation in~\eqref{model2} is replaced by a scalar implicit constitutive relation for the apparent viscosity.

Concerning the time discretisation, we assume a uniform partition of the time interval, $t_n = n\Delta t$, where $\Delta t > 0$ represents a fixed time step.  Moreover, since the main issue is the non-monotonicity of the constitutive relation, we for the sake of simplicity neglect the convective terms in the presentation of the numerical algorithm. However, the convective term is present in the numerical experiments reported later in Section~\ref{sec:simulations}.

Let ${\boldsymbol V}_h$, $P_h$ and $T_h$ be finite-dimensional spaces with bases $\left\{ \vec{\phi}_k \right\}_{k=1}^{N_{\vec{V}}}$, $\left\{ q_k \right\}_{k=1}^{N_{P}}$ and $\left\{ t_k \right\}_{k=1}^{N_{T}}$ respectively, while the pair ${\boldsymbol V}_h$, $P_h$ satisfies the standard Babu\v{s}ka--Brezzi condition, see for example~\cite{brezzi.f.fortin.m:mixed}.  In practice, ${\boldsymbol V}_h$, $P_h$ and $T_h$ denote the finite element spaces, related to a regular triangulation~$\mathcal{T}_h$ of the domain $\Omega$, see Section~\ref{sec:simulations} for the specification of the finite lement spaces used in the numerical experiments. Let us assume that the solution at time $t_n$ denoted as $(\vecv^n, p^n, \mu^n) \in \vec{V}_h\times P_h\times T_h$ is known, and let us compute the solution at time $t_{n+1}$ denoted as $(\vecv^{n+1},p^{n+1},\mu^{n+1}) \in \vec{V}_h\times P_h\times T_h$ as a solution of the following system of nonlinear algebraic equations
\begin{subequations}
  \label{nonlinear_scheme}
  \begin{align}
    \label{eq:nonlinear_scheme1}
    \rho\left(\frac{\vecv^{n+1} - \vecv^n}{\Delta t}, \bar{\vecv} \right)
    + \left( 2 \mu^{n+1} \gradsym^{n+1}, \overline{\gradsym} \right) 
    - \left( p^{n+1}, \divergence \bar{\vecv} \right)
    &= 0, \\
    \label{eq:nonlinear_scheme2}
    \left( \divergence \vecv^{n+1}, \bar{p} \right) &= 0, \\
    \label{eq:nonlinear_scheme3}
    \left( \mu^{n+1}, \bar{\mu} \right) 
    - \left( \frac{1}{2} \left[ \alpha \left( 1 + 4 \beta \absnorm{\mu^{n+1}}^2 \absnorm{\gradsym^{n+1}}^2 \right)^s + \gamma \right]^{-1}, \bar{\mu}\right) &= 0,  
  \end{align}
\end{subequations}
that must be satisfied for all base functions $(\bar{\vecv}, \bar{p}, \bar{\mu})$ in $\vec{V}_h \times P_h \times T_h$. Here the symbol $\overline{\gradsym}$ denotes the symmetric part of the gradient of $\bar{\vec{v}}$, $\overline{\gradsym} =_{\bydefinition} \frac{1}{2} \left( \nabla \bar{\vecv} + \transpose{\nabla \bar{\vecv}} \right)$, and the symbol $(a,b) = _{\bydefinition} \int_{\Omega} ab \, \cvolumee$ denotes the standard scalar product in the Lebesgue space $L^2(\Omega)$.

Since~\eqref{nonlinear_scheme} holds for all base functions $(\bar{\vecv}, \bar{p}, \bar{\mu})$ in $\vec{V}_h \times P_h \times T_h$, we also know that~\eqref{nonlinear_scheme} also holds if we set $(\bar{\vecv}, \bar{p}) = (\vecv^{n+1}, p^{n+1})$. This helps us to recover, in the case of boundary condition $\left. \vec{v} \right|_{\Omega} = \vec{0}$, the discrete version of the balance of energy~\eqref{contenergylaw},
\begin{equation}
  \label{discreteEL}
  \frac{1}{2} \rho \norm[L^2(\Omega)]{\vecv^{n+1}}^2
  +
  2 \Delta t \int_\Omega \mu^{n+1} \absnorm{\gradsym^{n+1}}^2
  \leq
  \frac{1}{2} \rho \norm[L^2(\Omega)]{\vecv^n}^2
  \leq
  \frac{1}{2} \rho \norm[L^2(\Omega)]{\vecv_{0,h}}^2,
\end{equation}
where the last inequality follows from the iteration of the first inequality with respect to $n$, and where~$\vecv_{0,h}$ denotes the approximation of the initial condition~\eqref{eq:14} in the space $\vec{V}_h$. Similarly, from~\eqref{eq:nonlinear_scheme3} and~\eqref{eq:21} we get
\begin{equation}
  \label{eq:42}
  \norm[L^2(\Omega)]{\mu^{n+1}} \leq \frac{\absnorm{\Omega}^{\frac{1}{2}}}{2 \gamma},
\end{equation}
where $\absnorm{\Omega}$ denotes the area/volume of the domain $\Omega$.

The question is whether the system of nonlinear algebraic equations has a solution. Using a standard lemma, we show that  that there exists at least one solution of~\eqref{nonlinear_scheme}. The standard lemma, which is proved for example in~\cite[Lemma~1.4, page~164, Chapter~II]{temam.r:navier-stokes*2}, states that

\begin{lemma}\label{temamLemma}
Let $\X$ be a finite dimensional Hilbert space with scalar product $(\cdot,\cdot)$ and norm $|\cdot|$, and let~$\mathcal{P}$ be a continuous mapping from $\X$ into itself. Assume that there exists $\theta>0$ such that
\begin{equation}
  \label{eq:45}
  (\mathcal{P}(\xi),\xi)>0 \quad \mbox{ for } |\xi|=\theta > 0.  
\end{equation}
Then, there exists $\xi\in \X$, $|\xi|\le \theta $, such that
\begin{equation}
  \label{eq:46}
  \mathcal{P}(\xi)=0.  
\end{equation}
\end{lemma}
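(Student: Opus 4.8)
The plan is to derive Lemma~\ref{temamLemma} from Brouwer's fixed point theorem by a standard argument by contradiction. Suppose, contrary to the claim, that $\mathcal{P}(\xi)\neq 0$ for every $\xi$ in the closed ball $B_\theta =_{\bydefinition} \{\xi\in\X : \absnorm{\xi}\le\theta\}$. Then the map
\begin{equation}
  \Phi : B_\theta \to B_\theta, \qquad \Phi(\xi) = -\theta\,\frac{\mathcal{P}(\xi)}{\absnorm{\mathcal{P}(\xi)}},
\end{equation}
is well defined: the denominator never vanishes on $B_\theta$, and $\absnorm{\Phi(\xi)}=\theta$, so $\Phi(\xi)\in B_\theta$. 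Since $\mathcal{P}$ is continuous and $\xi\mapsto\absnorm{\mathcal{P}(\xi)}$ is continuous and, by compactness of $B_\theta$ together with the no-zero hypothesis, bounded away from zero on $B_\theta$, the map $\Phi$ is continuous on $B_\theta$.

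Next I would invoke Brouwer's fixed point theorem: $B_\theta$ is homeomorphic to the closed unit ball of $\R^{N}$ with $N=\dim\X$, hence the continuous self-map $\Phi$ has a fixed point $\xi_0 = \Phi(\xi_0)$. Because the range of $\Phi$ lies on the sphere $\{\absnorm{\xi}=\theta\}$, necessarily $\absnorm{\xi_0}=\theta$. Computing the scalar product,
\begin{equation}
  (\mathcal{P}(\xi_0),\xi_0) = \left(\mathcal{P}(\xi_0),\, -\theta\,\frac{\mathcal{P}(\xi_0)}{\absnorm{\mathcal{P}(\xi_0)}}\right) = -\theta\,\absnorm{\mathcal{P}(\xi_0)} < 0,
\end{equation}
the strict inequality holding since $\mathcal{P}(\xi_0)\neq 0$. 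On the other hand, $\absnorm{\xi_0}=\theta$ together with assumption~\eqref{eq:45} gives $(\mathcal{P}(\xi_0),\xi_0)>0$, a contradiction. Therefore $\mathcal{P}$ must vanish at some point of $B_\theta$, i.e.\ there exists $\xi\in\X$ with $\absnorm{\xi}\le\theta$ and $\mathcal{P}(\xi)=0$.

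There is no substantial obstacle here; the only points requiring care are (i) that $\Phi$ genuinely maps $B_\theta$ into itself and is continuous, which relies on $\absnorm{\mathcal{P}(\,\cdot\,)}$ being bounded below on the compact ball under the contradiction hypothesis, and (ii) that any fixed point of $\Phi$ automatically sits on the boundary sphere, so that~\eqref{eq:45} may be applied to it. As an alternative one may argue via the Brouwer degree: the homotopy $H(\xi,t)=t\mathcal{P}(\xi)+(1-t)\xi$ satisfies $(H(\xi,t),\xi)=t(\mathcal{P}(\xi),\xi)+(1-t)\theta^2>0$ on $\{\absnorm{\xi}=\theta\}$ by~\eqref{eq:45}, hence $H(\cdot,t)$ has no zero on that sphere, so $\deg(\mathcal{P},B_\theta,0)=\deg(\mathrm{id},B_\theta,0)=1\neq 0$, which forces a zero of $\mathcal{P}$ inside $B_\theta$; the elementary Brouwer argument above is, however, the shorter route and is the one I would present.
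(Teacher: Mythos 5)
Your argument is correct and is essentially the classical proof of this result: the paper itself does not reprove the lemma but cites Temam (Lemma 1.4, Chapter II), whose proof is exactly this contradiction argument via the map $\xi \mapsto -\theta\,\mathcal{P}(\xi)/\absnorm{\mathcal{P}(\xi)}$ and Brouwer's fixed point theorem. Nothing further is needed.
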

Now we are ready to present the proof of the existence of a solution.
\begin{lemma}
There exists at least one solution to the nonlinear system of algebraic equations~\eqref{nonlinear_scheme}.
\end{lemma}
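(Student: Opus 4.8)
The plan is to rewrite the discrete system~\eqref{nonlinear_scheme} as an equation $\mathcal{P}(\xi) = \vec{0}$ for a continuous self-map of a finite-dimensional Hilbert space and to invoke Lemma~\ref{temamLemma}. As usual in mixed finite element methods, I would first dispose of the pressure by means of the Babu\v{s}ka--Brezzi condition: letting $\vec{V}_h^{\mathrm{div}} := \{ \vec{w} \in \vec{V}_h : (\divergence \vec{w}, \bar{p}) = 0 \text{ for all } \bar{p} \in P_h \}$, it is enough to find a pair $(\vecv^{n+1}, \mu^{n+1})$ in $\X := \vec{V}_h^{\mathrm{div}} \times T_h$ solving~\eqref{eq:nonlinear_scheme1} restricted to test functions $\bar{\vecv} \in \vec{V}_h^{\mathrm{div}}$ (which removes the pressure term) together with~\eqref{eq:nonlinear_scheme3}; equation~\eqref{eq:nonlinear_scheme2} then holds by construction of $\vec{V}_h^{\mathrm{div}}$, and $p^{n+1} \in P_h$ is recovered afterwards from the residual of~\eqref{eq:nonlinear_scheme1} via the inf-sup inequality. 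On $\X$, equipped with the $L^2(\Omega)$ inner product on each factor, define $\mathcal{P}(\vec{w},m) = (\vec{F}, M)$ by Riesz representation of the two forms on the left-hand sides of~\eqref{eq:nonlinear_scheme1} and~\eqref{eq:nonlinear_scheme3}. Continuity of $\mathcal{P}$ is immediate, since on the finite-dimensional space $\X$ all norms are equivalent and every superposition operator in sight is continuous---in particular $u \mapsto \tilde{\mu}(u)$ from~\eqref{def:viscosity} is smooth and, by~\eqref{eq:21}, bounded.

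It then remains to produce $\theta > 0$ with $(\mathcal{P}(\xi),\xi) > 0$ on the sphere $\absnorm{\xi} = \theta$. Testing with $\xi = (\vec{w}, m)$ and using $\sym{\nabla\vec{w}} : \sym{\nabla\vec{w}} = \absnorm{\sym{\nabla\vec{w}}}^2$ one obtains
\[
  (\mathcal{P}(\xi),\xi) = \frac{\rho}{\Delta t} \norm[L^2(\Omega)]{\vec{w}}^2 - \frac{\rho}{\Delta t}(\vecv^n, \vec{w}) + 2\int_\Omega m\,\absnorm{\sym{\nabla\vec{w}}}^2 + \norm[L^2(\Omega)]{m}^2 - \int_\Omega \frac{1}{2}\left[\alpha\left(1 + 4\beta m^2 \absnorm{\sym{\nabla\vec{w}}}^2\right)^s + \gamma\right]^{-1} m.
\]
The term $\frac{\rho}{\Delta t}(\vecv^n,\vec{w})$ is absorbed by Cauchy--Schwarz and Young; the last integral is, by the upper bound $\tilde{\mu} \le 1/(2\gamma)$ in~\eqref{eq:21}, at most $\tfrac{\absnorm{\Omega}^{1/2}}{2\gamma}\norm[L^2(\Omega)]{m}$, so that $\norm[L^2(\Omega)]{m}^2$ dominates it once $\norm[L^2(\Omega)]{m}$ is large. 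Thus everything is coercive were it not for the middle term $2\int_\Omega m\,\absnorm{\sym{\nabla\vec{w}}}^2$, which has no definite sign because $m \in T_h$ need not be pointwise nonnegative.

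Controlling that term is the heart of the matter, and it is where the reformulation via the scalar apparent viscosity earns its keep: one resolves equation~\eqref{eq:nonlinear_scheme3} for $\mu$ first, with $\vec{w}$ frozen. The map sending $m \in T_h$ to the $L^2$-projection onto $T_h$ of $\frac{1}{2}\left[\alpha\left(1 + 4\beta m^2 \absnorm{\sym{\nabla\vec{w}}}^2\right)^s + \gamma\right]^{-1}$ carries the ball $\{ m \in T_h : \norm[L^2(\Omega)]{m} \le \absnorm{\Omega}^{1/2}/(2\gamma) \}$ continuously into itself---again by~\eqref{eq:21}---so Lemma~\ref{temamLemma} (or plain Brouwer) yields a solution $m = \mathcal{M}(\vec{w})$, and by~\eqref{eq:42} every such $m$ is bounded in $L^2(\Omega)$ by a constant independent of $\vec{w}$; for the finite element spaces $T_h$ used in the experiments the discrete viscosity additionally inherits the strict positivity of $\tilde{\mu}$, which renders $2\int_\Omega m\,\absnorm{\sym{\nabla\vec{w}}}^2$ nonnegative. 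Substituting $\mu^{n+1} = \mathcal{M}(\vecv^{n+1})$ reduces the problem to a fixed point for the velocity alone; with the viscous term now either nonnegative or absorbed using finite-dimensional norm equivalence, $(\mathcal{P}(\xi),\xi) > 0$ holds for all sufficiently large $\theta$, Lemma~\ref{temamLemma} furnishes $\xi = (\vecv^{n+1}, \mu^{n+1})$ with $\absnorm{\xi} \le \theta$, and $p^{n+1}$ follows from the inf-sup condition. The points to handle with care are the well-definedness of $\mathcal{M}$ (a genuinely multivalued $\mathcal{M}$ would require a Kakutani-type argument or a continuous selection) and the sign of the discrete viscosity; the rest is routine bookkeeping.
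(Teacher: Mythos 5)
The crux of your argument is the reduction step: you freeze the velocity, solve the discrete viscosity equation~\eqref{eq:nonlinear_scheme3} to get $m=\mathcal{M}(\vec{w})$, and then substitute $\mu^{n+1}=\mathcal{M}(\vecv^{n+1})$ to obtain a fixed-point problem for the velocity alone. This is precisely where the proposal has a genuine gap, and you half-admit it yourself in the last sentence. For a fixed $\vec{w}$, the equation~\eqref{eq:nonlinear_scheme3} (e.g.\ for $T_h=d\mathcal{P}_0$ it is, element by element, a scalar fixed point of a bounded \emph{increasing} function of $\mu$) can have \emph{several} solutions --- this is not a technicality but exactly the multivaluedness caused by the non-monotone constitutive relation that the whole paper is about. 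Hence $\mathcal{M}$ is in general a multimap: a single-valued continuous selection is not established, and a Kakutani-type argument is not available off the shelf either, since the solution set of such a scalar equation is typically a finite set of isolated points and thus not convex. Without a well-defined continuous $\mathcal{M}$, the continuity of your reduced map $\mathcal{P}$ fails and the Brouwer/Temam argument for the velocity does not go through; likewise the claimed positivity of the discrete viscosity is tied to the particular choice $T_h=d\mathcal{P}_0$ rather than to the general setting of the lemma. So, as written, the proof is incomplete at its central step.

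The paper avoids this trap by \emph{never} resolving $\mu$ in terms of $\vecv$: it applies Lemma~\ref{temamLemma} directly on the full product space $\X_h=\vec{V}_h\times P_h\times T_h$ with the map $\mathcal{P}$ built from all three residuals, and tests with the triple $(\tilde{\vecv},\tilde p,\tilde\mu)$ itself. The two pressure terms cancel (so no divergence-free subspace or inf-sup recovery is needed for existence --- the pressure is simply part of the solution), and the viscosity equation contributes $\tfrac12\norm[L^2(\Omega)]{\tilde\mu}^2$ minus a term that is bounded uniformly thanks to~\eqref{eq:21}, so no uniqueness or selection for $\mu$ given $\vecv$ is ever invoked. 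Your observation that the coupled term $2\int_\Omega \tilde\mu\absnorm{\tilde{\gradsym}}^2$ has no sign for a general $\tilde\mu\in T_h$ is a fair one (the paper's own positivity claim treats it rather briskly), but the remedy is not the nested fixed point you propose; if you want to pursue your route you must either prove a continuous selection for $\mathcal{M}$ or switch to a set-valued fixed-point theorem whose hypotheses you verify, neither of which is done in the proposal.
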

\begin{proof}
  The existence of a solution of \eqref{nonlinear_scheme} follows from Lemma~\ref{temamLemma}. First we need to define the appropriate function space, we set
  \begin{equation}
    \label{eq:43}
    \X_h
    =
    _{\bydefinition}
    \vec{V}_h\times P_h\times T_h,
  \end{equation}
  and we define the corresponding scalar product via
  \begin{equation}
    \label{eq:44}
    \Big((\tilde{\boldsymbol v}, \tilde p, \tilde\mu), (\bar{\boldsymbol v}, \bar p,\bar \mu) \Big)
    =
    _{\bydefinition}
    (\tilde{\boldsymbol v}, \bar{\boldsymbol v})
    +
    (\tilde p, \bar p)
    +
    (\tilde\mu,\bar \mu)
    ,
  \end{equation}
  where the round brackets denote the scalar product in $\vec{V}_h$, $P_h$ and $T_h$ respectively. The mapping $\mathcal{P}(\tilde{\boldsymbol v}, \tilde p, \tilde\mu)$ is defined in the standard way as
  \begin{equation}
    \label{eq:47}
    \mathcal{P}:
    \begin{bmatrix}
      \tilde{\vecv} \\
      \tilde{p} \\
      \tilde{\mu}
    \end{bmatrix}
    \mapsto
    \begin{bmatrix}
      \rho \frac{\tilde{\vec{v}} - \vecv^n}{\Delta t} + 2 \tilde{\mu} \tilde{\gradsym} + \nabla \tilde{p} \\
      \divergence \tilde{\vec{v}} \\
      \tilde{\mu} - \frac{1}{2} \left[ \alpha \left( 1 + 4 \beta \absnorm{\tilde{\mu}}^2 \absnorm{\tilde{\gradsym}}^2 \right)^s + \gamma \right]^{-1}
    \end{bmatrix}
    ,
  \end{equation}
  where the symbol $\tilde{\gradsym}$ denotes the symmetric part of the gradient of $\tilde{\vec{v}}$, $\tilde{\gradsym} =_{\bydefinition} \frac{1}{2} \left( \nabla \tilde{\vecv} + \transpose{\nabla \tilde{\vecv}} \right)$. The product $(\mathcal{P}(\tilde{\xi}),\bar{\xi})$ then reads
  \begin{multline}
    \label{eq:48}
    \Big(\mathcal{P}(\tilde{\boldsymbol v}, \tilde p, \tilde\mu),(\bar{\boldsymbol v}, \bar p,\bar \mu)\Big)
    =
    \rho\frac{1}{\Delta t}(\tilde{\boldsymbol v}, \bar {\boldsymbol v})
    -
    \rho\frac{1}{\Delta t}({\boldsymbol v}^n, \bar {\boldsymbol v})
    +
    2(\tilde{\mu}  \tilde{\gradsym}, \overline{\gradsym}) 
    +
    (\tilde{\mu}, \bar\mu)
    \\
    -
    \Big( \frac{1}{2} \big[\alpha(1 + 4 \beta \absnorm{\tilde{\mu}}^2 \absnorm{\tilde{\gradsym}}^2)^s + \gamma\big]^{-1},\bar\mu\Big)
    -
    (\tilde p, \divergence \bar {\boldsymbol v})
    +
    (\divergence \tilde{\boldsymbol v}, \bar p)
    ,
  \end{multline}
  hence for the product $$(\mathcal{P}(\tilde{\xi}),\tilde{\xi})$$ one has
  \begin{multline}
    \label{eq:49}
    \Big(\mathcal{P}(\tilde{\boldsymbol v}, \tilde p, \tilde\mu),(\tilde{\boldsymbol v}, \tilde p, \tilde\mu)\Big)
    \geq
    \rho \frac{1}{2 \Delta t} \norm[L^2(\Omega)]{\tilde{\boldsymbol v}}^2 
    -
    \rho \frac{1}{2 \Delta t} \norm[L^2(\Omega)]{\vecv^n}^2 
    +
    2
    \int_\Omega \tilde{\mu} \absnorm{\tilde{\gradsym}}^2 \, \cvolumee
    \\
    +
    \frac{1}{2}\norm[L^2(\Omega)]{\tilde{\mu}}^2 
    -
    \frac{1}{2}
    \norm[L^2(\Omega)]{ \left[ \frac{1}{2} \alpha(1 + 4 \beta \absnorm{\tilde{\mu}}^2 \absnorm{\tilde{\gradsym}}^2)^s + \gamma \right]^{-1}}^2,
  \end{multline}
  where we have used the Cauchy--Schwarz inequality. Clearly, if $\norm[L^2(\Omega)]{\tilde{\boldsymbol v}}^2$ and $\norm[L^2(\Omega)]{\tilde{\mu}}^2$ are large enough, we have
  \begin{equation}
    \label{eq:50}
    \Big(\mathcal{P}(\tilde{\boldsymbol v}, \tilde p, \tilde\mu),(\tilde{\boldsymbol v}, \tilde p, \tilde\mu)\Big) > 0.
  \end{equation}
  (Recall that~\eqref{eq:21} guarantees boundedness of the last term in~\eqref{eq:49}.) We can use Lemma~\ref{temamLemma} and conclude that the exists a triple $(\hat{\boldsymbol v}, \hat p, \hat\mu)\in\X_h$ such that $\mathcal{P}(\hat{\boldsymbol v}, \hat p, \hat\mu)=0$.   Consequently, due to the definition of~$\mathcal{P}$ the triple $(\hat{\boldsymbol v}, \hat p, \hat\mu)$ is a solution of~\eqref{nonlinear_scheme}. 
\end{proof}

\subsection{Iterative algorithm}
\label{sec:iterative-algorithm}

In order to numerically solve the system of nonlinear equations~\eqref{nonlinear_scheme}, we propose the following iterative scheme for updating the triple $(\vecv^{n}, p^{n}, \mu^n)$ from the time $t_n=n \Delta t$ to the triple $(\vecv^{n+1}, p^{n+1}, \mu^{n+1})$ at the time $t_{n+1}= (n+1) \Delta t$. 

\begin{description}[style=nextline]
\item[\textbf{Initialization:}]
Define $(\vecv^{0}, p^{0}, \mu^0) =_{\bydefinition} (\vecv^{n}, p^{n}, \mu^n)$.

\item[\textbf{Step 1:}]
Given $(\vecv^{\ell}, p^{\ell}, \mu ^\ell)$, to find $(\vec{v}^{\ell+1}, p^{\ell+1})$ such that $\forall\, (\bar{\vecv}, \bar{p}) \in \vec{V}_h \times P_h$:
\begin{equation}
  \label{scheme}
  \begin{array}{r}\displaystyle
    \left(\frac{\vecv^{\ell+1} - \vecv^n}{\Delta t}, \bar{\vecv} \right)
    + \left( 2 \mu^{l} \gradsym^{\ell+1}, \overline{\gradsym} \right) 
    - \left( p^{\ell+1}, \divergence \bar{\vecv} \right)
    = \vec{0},
    \\
    \left( \divergence \vecv^{\ell+1}, \bar{p} \right) = 0.
  \end{array}
\end{equation}

\item[\textbf{Step 2:}]
Compute 
\begin{equation}
  \label{thetasche}
  \mu^{\ell+1} = \frac{1}{2} \left[ \alpha \left( 1 + 4 \beta \absnorm{\mu^\ell}^2 \absnorm{\gradsym^{\ell+1}}^2 \right)^s + \gamma \right]^{-1}.
\end{equation}

\item[\textbf{Step 3:}]
Compute 
\begin{equation*}
  \eta = 
  \norm[L^2(\Omega)]{\mu^{\ell+1} - \mu^{\ell}}
  + \norm[L^2(\Omega)]{\vecv^{\ell+1} - \vecv^{\ell}}
  + \norm[L^2(\Omega)]{{p}^{\ell+1} - {p}^{\ell}}  
\end{equation*}
and then check if
\begin{equation}
  \label{stopcri}
  \left\{
    \begin{array}{rcl}
      \eta > \mathrm{tol} & \Rightarrow 
      & \mbox{update $(\vecv^{\ell}, p^{\ell}, \mu^{\ell}) =_\bydefinition (\vecv^{\ell+1}, p^{\ell+1}, \mu^{\ell+1})$, go to \textbf{Step 1}, and iterate again},
      \\
      \eta \le \mathrm{tol} & \Rightarrow 
      & \mbox{move to the new time step, define } (\vecv^{n+1}, p^{n+1}, \mu^{n+1}) =_{\bydefinition} (\vecv^{\ell+1}, p^{\ell+1}, \mu^{\ell+1}),
    \end{array}
  \right.
\end{equation}
where $\mathrm{tol} > 0$ represents a tolerance parameter/stopping criterion.
\end{description}


\section{Numerical experiments -- reduced problem}
\label{sec:uq}

In order to investigate qualitative features of models based on the implicit non-monotone constitutive relations, we present a reduced version of the system~\eqref{model2}, and introduce a numerical scheme analogous to the scheme proposed in Section~\ref{sec:scheme}. The idea is to design a reduced model that would allow us to see the \emph{qualitative behaviour that is induced by the non-monotone constitutive relation} without the unnecessary complications such as the convective nonlinearity and the incompressibility condition.

\subsection{Reduced model}
\label{sec:model-numer-scheme}
In particular, instead of the vector-tensor variables $(\vecv, \dcstresssymb)$ we consider scalar-vector variables $(u, \vec{q})$, whose evolution is governed by the system
\begin{subequations}
  \label{eq:reduced-problem}
  \begin{align}
    \label{eq:23}
    \pd{u}{t} &= \divergence \vec{q}, \\
    \label{eq:27}
    \nabla u &= \left[ a \left( 1 + b \absnorm{\vec{q}}^2 \right)^n + c \right] \vec{q}.
  \end{align}
\end{subequations}
This system is with respect to the relation between the \emph{flux} $\vec{q}$ and the \emph{affinity} $\nabla u$ structurally similar to~\eqref{model2}, where the \emph{flux} is the Cauchy stress tensor $\traceless{\cstress}$ and the \emph{affinity} is the symmetric part of the velocity gradient $\gradsym$. Note that if we interpret $\vec{q}$ as the heat flux and $u$ as the temperature, then~\eqref{eq:27} corresponds to an implicit variant of Fourier's law, see~\cite{janecka.a.pavelka.m:non-convex} and references therein.

Following the idea exploited in \eqref{def:viscosity}, we define the quantity~$\tilde{\mu}$ as
\begin{equation}
  \tilde{\mu} (\vec{q}) =_\bydefinition \left[ a \left( 1 + b \absnorm{\vec{q}}^2 \right)^n + c \right]^{-1},
\end{equation}
and the problem \eqref{eq:reduced-problem} can be rewritten as a system for $(u, \tilde{\mu})$ as
\begin{subequations}
  \label{eq:reduced-problem-reformulated}
  \begin{align}
      \pd{u}{t} &= \divergence \left( \tilde{\mu} \nabla u \right), \\ 
      \tilde{\mu} &= \left[ a \left(1 + b \tilde{\mu}^2 \absnorm{\nabla u}^2 \right)^n + c \right]^{-1}.
  \end{align}
\end{subequations}
If $n<-\frac{1}{2}$, then the flux--affinity constitutive relation is, in general, non-monotone, and the relation between the norms qualitatively corresponds to that shown in Figure~\ref{fig:constitutive-relation-regions}.
\begin{figure}[h]
  \centering
  \includegraphics[width=0.35\textwidth]{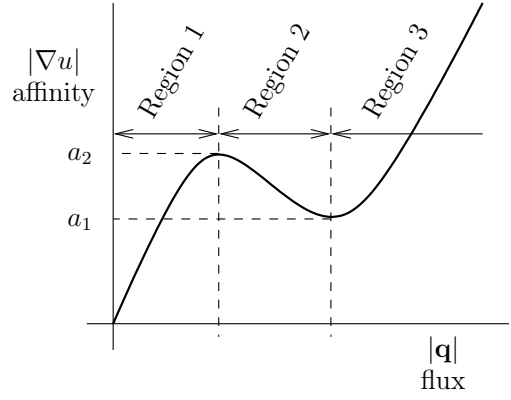}
  \caption{Different regions in the non-monotone constitutive relation.}
  \label{fig:constitutive-relation-regions}
\end{figure}

\subsection{Iterative algorithm}
\label{sec:numerical-scheme-1}

Using the same arguments as in Section~\ref{sec:scheme}, we propose the following iterative scheme for the update of~$u^n$ and $\tilde{\mu}^n$ at the time $t_n$ to $u^{n+1}$ and  $\tilde{\mu}^{n+1}$ at the time $t_{n+1}$.

\begin{description}[style=nextline]
\item[\textbf{Initialization:}]
  Define $\left( u^0, \tilde{\mu}^0 \right) =_{\bydefinition} \left( u^n,  \tilde{\mu}^{n}\right)$.
\item[\textbf{Step 1:}]
  From $(u^{\ell}, \tilde{\mu}^\ell)$, find $u^{\ell+1}$ such that 
  \begin{equation}
    \label{eq:reduced-scheme}
    \left( \frac{u^{\ell+1} - u^n}{\Delta t}, \bar{u} \right)
    + \left(\tilde{\mu}^{\ell} \nabla u^{\ell+1}, \nabla \bar{u} \right) 
    = 0, \qquad
    \forall \bar{u} \in U_h.
  \end{equation}
  
\item[\textbf{Step 2:}]
  Compute 
  \begin{equation}
    \tilde{\mu}^{\ell+1} = \left[ a \left( 1 + b \left( \tilde{\mu}^\ell \right)^2 \absnorm{\nabla u^{\ell+1}}^2 \right)^s + c \right]^{-1}.
  \end{equation}

\item[\textbf{Step 3:}]
  Compute 
  \begin{equation}
    \eta =
    \norm[L^2(\Omega)]{u^{\ell+1} - u^{\ell}}
    + \norm[L^2(\Omega)]{\tilde{\mu}^{\ell+1} - \tilde{\mu}^{\ell}},
  \end{equation}
  and then check if
  \begin{equation}
    \label{eq:reduced-stop-criterium}
    \left\{
      \begin{aligned}
        \eta > \mathrm{tol} \Rightarrow \; &\text{update $(u^\ell, \tilde{\mu}^\ell) =_\bydefinition (u^{\ell+1}, \tilde{\mu}^{\ell+1})$, go to \textbf{Step 1}, and iterate again,} \\
        \eta \leq \mathrm{tol} \Rightarrow \; &\text{move to the new time step, define } (u^n, \tilde{\mu}^n) =_\bydefinition (u^{\ell+1}, \tilde{\mu}^{\ell+1}).
      \end{aligned}
    \right.
  \end{equation}
  where $\mathrm{tol} > 0$ represents a tolerance parameter/stopping criterion.
\end{description}

\subsection{Results}
\label{sec:numerical-results}

Let us present results of several numerical experiments using the numerical scheme \eqref{eq:reduced-scheme}--\eqref{eq:reduced-stop-criterium}. We consider a unit square domain $\Omega =_\bydefinition [0,1]^2$ with $50 \times 50$ triangular mesh. To goal is to determine the behavior of the system depending on the initial and boundary conditions. \emph{Initial conditions} are chosen so that the constitutive relation is satisfied identically in the whole domain with values corresponding to one of the three regions of the non-monotone constitutive relation, see Figure~\ref{fig:constitutive-relation-regions}.

We consider two types of \emph{boundary conditions}. First we consider zero Dirichlet boundary condition $\left. u \right|_{\partial \Omega} = 0$ (Type A) and then the non-homogeneous Dirichlet boundary condition $\left. u \right|_{x = 0} = y (1-y)$, $\left. u \right|_{\partial \Omega \backslash \{x = 0\}} = 0$ (Type B).

We consider four different \emph{initial conditions}. The particular \emph{initial condition} is always specified only by the constant initial vector $\vec{q}^0 = \transpose{\begin{bmatrix} q^0_{\hat{x}} & 0\end{bmatrix}}$, and the initial value of $u$ is given by
\begin{equation}
  \tilde{\mu}^0 u^0 = q^0_{\hat{x}} x,
\end{equation}
with the initial apparent viscosity is given by
\begin{equation}
  \tilde{\mu}^0 = \left[ a \left( 1 + b \absnorm{\vec{q}^0}^2 \right)^n + c \right]^{-1}.
\end{equation}
Various choices of $\vec{q}^0$ always lead to the initial condition for flux--affinity pair that is \emph{consistent with the constitutive relation}~\ref{eq:27}. Using different values of $\vec{q}^0$ one can start with different locations of the initial flux--affinity pair at the constitutive curve, see Figure~\ref{fig:constitutive-relation-regions}.

The problem is solved using a finite element approximation in space and the backward Euler method in time in the FEniCS Project software, see~\cite{logg.a.mardal.k.ea:automated} and \cite{aln-s.m.blechta.j.ea:fenics}. The unknown field~$u$ is approximated by the finite element space $\mathcal{P}_1 =_\bydefinition \left\{ f \in C \left( \overline{\Omega} \right): \left. f \right|_T \in P_1 (T), \forall T \in \mathcal{T}_h \right\}$, whereas the apparent viscosity $\tilde{\mu}$ is approximated by the piecewise constant finite element space $d \mathcal{P}_0 =_\bydefinition \left\{ f \in L^2 \left( \Omega \right): \left. f \right|_T \in P_0 (T), \forall T \in \mathcal{T}_h \right\}$. (The notation $d \mathcal{P}_0$ follows the notation used in~\cite{arnold.dn.logg.a:periodic}.) The idea is to iterate one time step from $t = 0$ to $t = \Delta t$ to understand the dynamics of the constitutive relation. The used parameters are listed in Table~\ref{tab:reduced-problem-parameters}.

\begin{table}[h]
  \centering
  \begin{tabular}{*{7}{c}}
    \toprule
    $\Delta t$ & $a$ & $b$ & $c$ & $n$ & $\mathrm{tol}$  \\
    \midrule
    $10^{-10}$ & $1.0$ & $0.1$ & $10^{-3}$ & $-0.75$ & $10^{-10}$ \\
    \bottomrule
  \end{tabular}
  \caption{Simulation parameters for the reduced problem.}
  \label{tab:reduced-problem-parameters}  
\end{table}

\subsubsection{Case 1: Initial condition in Region 1 and Type B boundary conditions}
\label{sec:case-1:-initial}

Considering $\vec{q}^0 = \transpose{\begin{bmatrix} 3 & 0 \end{bmatrix}}$, all the points are initially in Region 1 of the constitutive curve, see Figure~\ref{fig:case1-ic}. As the time passes, the flux--affinity pairs $[\vec{q}, \nabla u]$ that are evaluated at each mesh point, move along the constitutive curve in such a manner that there is no overlap between Region 1 and Region 3. (Meaning that all the points where the value of the affinity $\nabla u$ allows multiple associated fluxes $\vec{q}$ are located in Region 1. None of the actual flux--affinity pairs is in this presumably ambiguous case located in Region 3.) Moreover, none of the flux--affinity pairs can be found in Region 2 (the decreasing part of the constitutive curve) see Figure~\ref{fig:case1-1it}. The same observation holds true also for later times $t$. 

\begin{figure}[h]
  \centering
  \subfloat[\label{fig:case1-ic} Initial condition. All initial flux--affinity pairs $\vec{q}$, $\nabla u$ are located in Region 1.]{\includegraphics[width=0.45\textwidth]{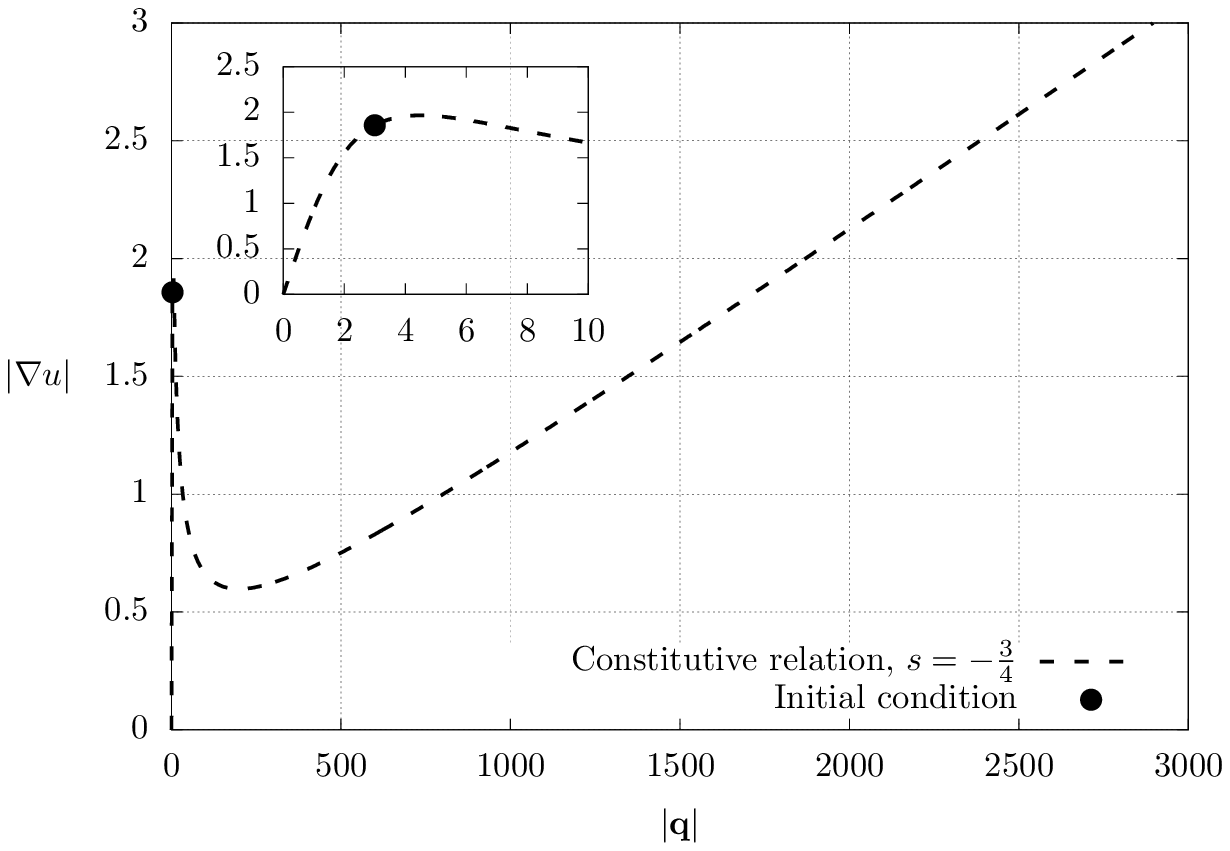}}
  \qquad
  \subfloat[\label{fig:case1-1it} Computed solution at time $t = \Delta t$.]{\includegraphics[width=0.45\textwidth]{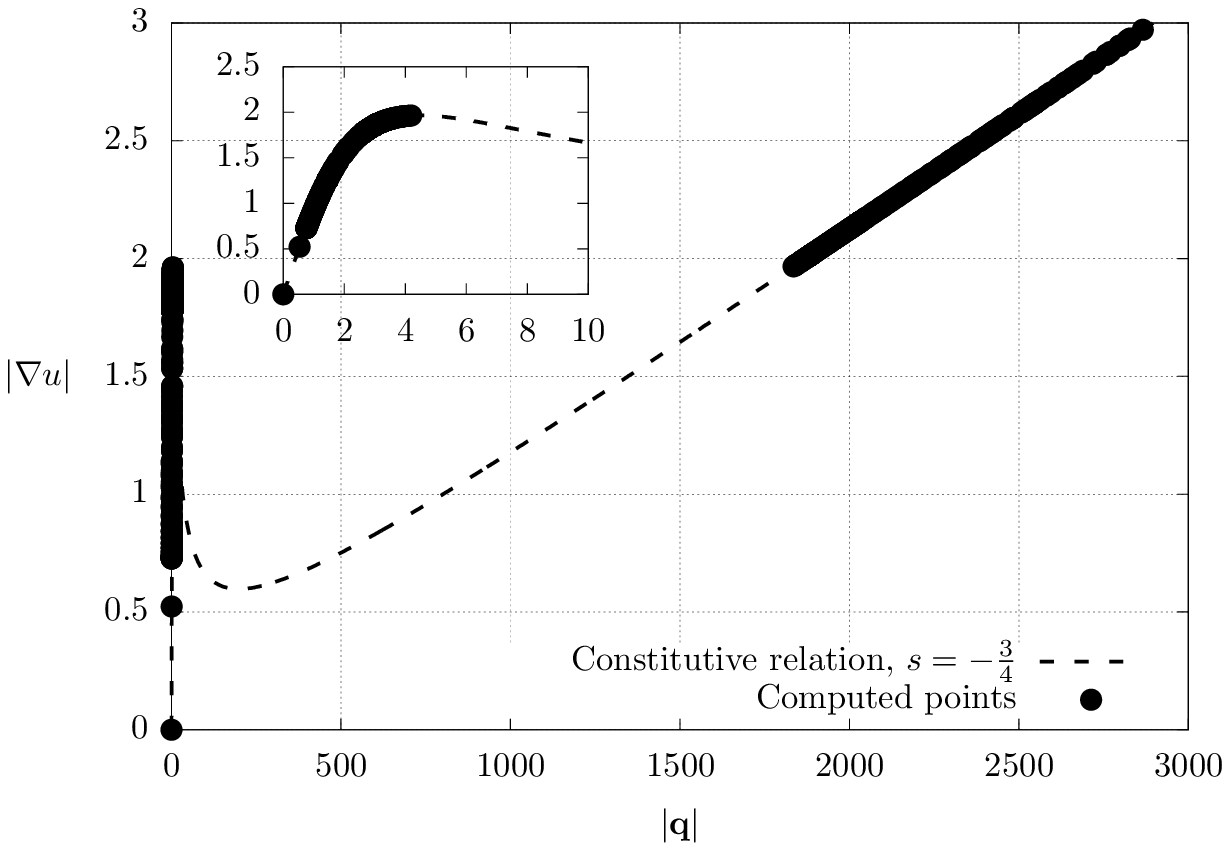}}
  \caption{Case 1. Initial condition and computed solution at $t=\Delta t$, $\vec{q}^0 = \transpose{\left[ 3 \ 0 \right]}$, non-homogeneous Dirichlet boundary condition.}
  \label{fig:case1}
\end{figure}

\subsubsection{Case 2: Initial condition in Region 3 and Type A boundary conditions}
\label{sec:case-2:-initial}

For $\vec{q}^0 = \transpose{\begin{bmatrix} 1000 & 0 \end{bmatrix}}$, all the points are initially in Region 3 of the constitutive curve, see Figure~\ref{fig:case2-ic}. Again, after one time step, and for all later times, there is no overlap between Region~1 and Region~3, and no actual flux--affinity pair is located in Region 2 of the constitutive curve, see Figure~\ref{fig:case2-1it}. In order to resolve all flux-affinity pairs for small values of $\absnorm{\vec{q}}$, we had to use eight times denser mesh than in Case 1.

\begin{figure}[h]
  \centering
  \subfloat[\label{fig:case2-ic} Initial condition. All initial flux--affinity pairs $\vec{q}$, $\nabla u$ are located in Region 3.]{\includegraphics[width=0.45\textwidth]{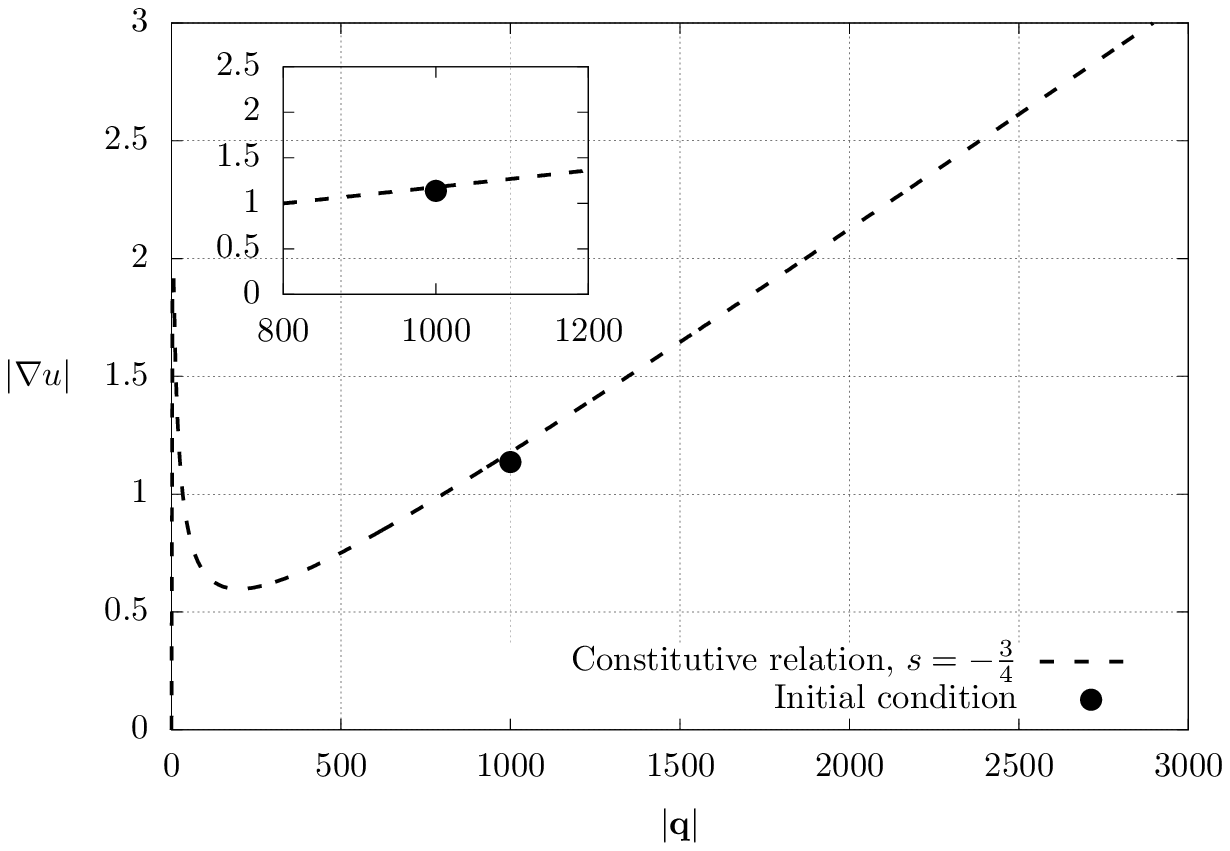}}
  \qquad
  \subfloat[\label{fig:case2-1it} Computed solution at time $t = \Delta t$.]{\includegraphics[width=0.45\textwidth]{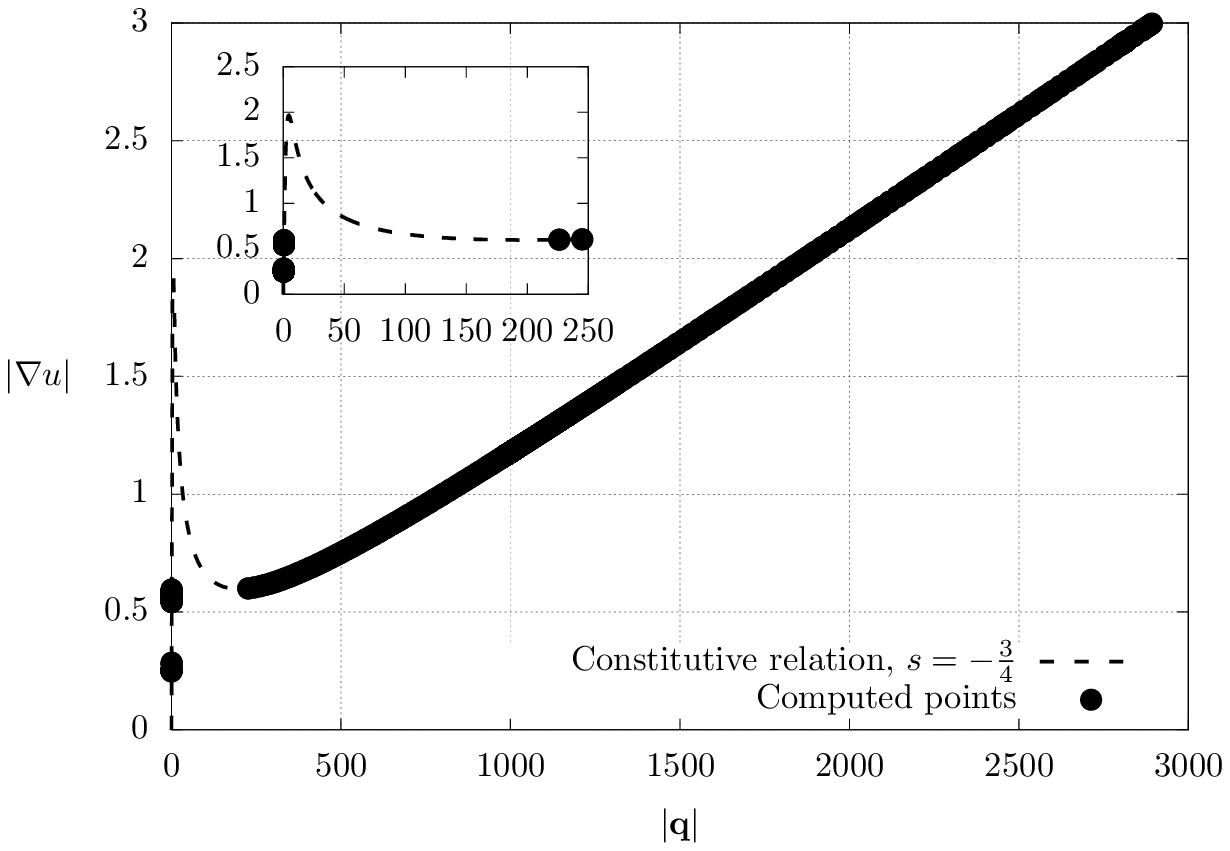}}
  \caption{Case 2. Initial condition and computed solution at $t=\Delta t$, $\vec{q}^0 = \transpose{\left[ 1000 \ 0 \right]}$, homogeneous Dirichlet boundary condition.}
  \label{fig:case2}
\end{figure}

\subsubsection{Case 3: Initial condition in Region 2 and Type A boundary conditions}
\label{sec:case-3:-initial}

Here, $\vec{q}^0 = \transpose{\begin{bmatrix} 25 & 0 \end{bmatrix}}$, hence all the flux-affinity pairs are initially located in Region 2, see Figure~\ref{fig:case3-ic}. As the time evolves, the flux--affinity pairs move from Region 2 to Region 1 and Region 3. Again there is no overlap between these two regions, see Figure~\ref{fig:case3-1it}.

\begin{figure}[h]
  \centering
  \subfloat[\label{fig:case3-ic} Initial condition. All initial flux--affinity pairs $\vec{q}$, $\nabla u$ are located in Region 2.]{\includegraphics[width=0.45\textwidth]{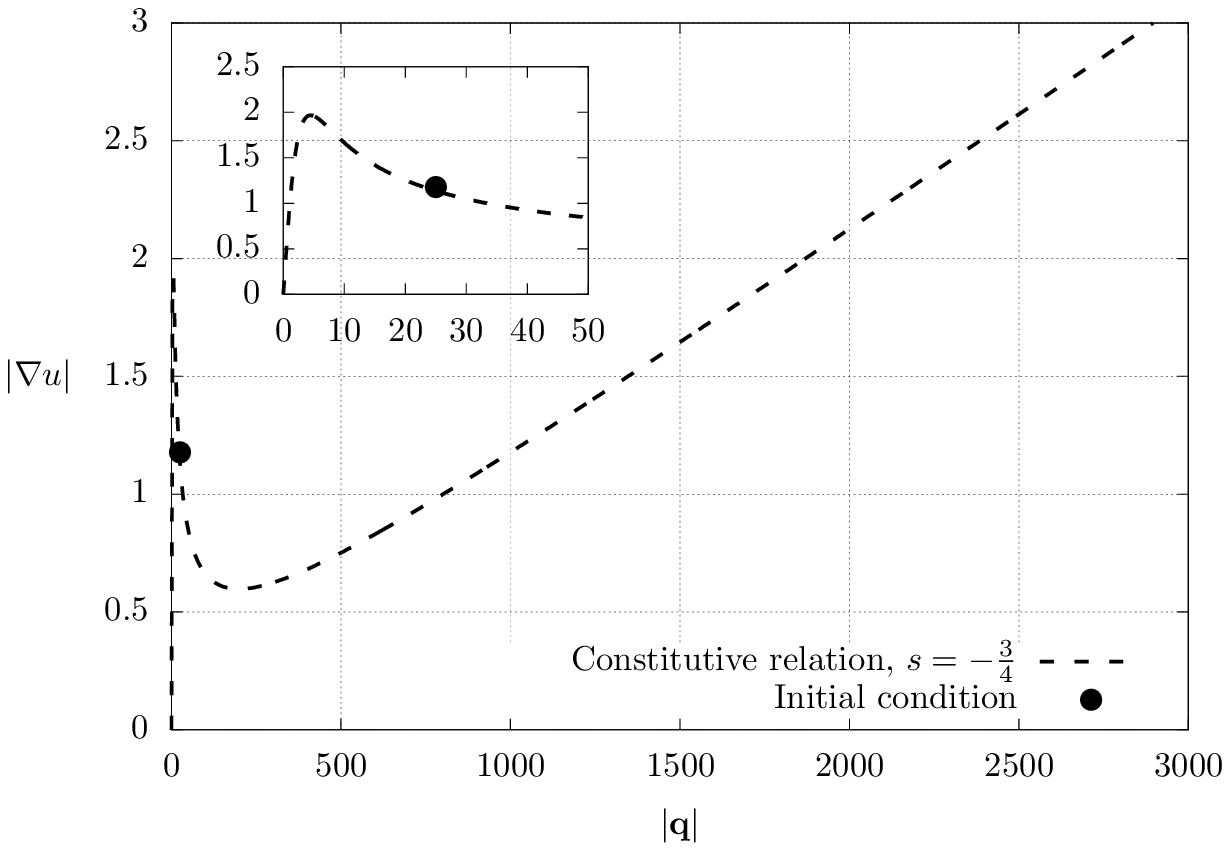}}
  \qquad
  \subfloat[\label{fig:case3-1it} Computed solution at time $t = \Delta t$.]{\includegraphics[width=0.45\textwidth]{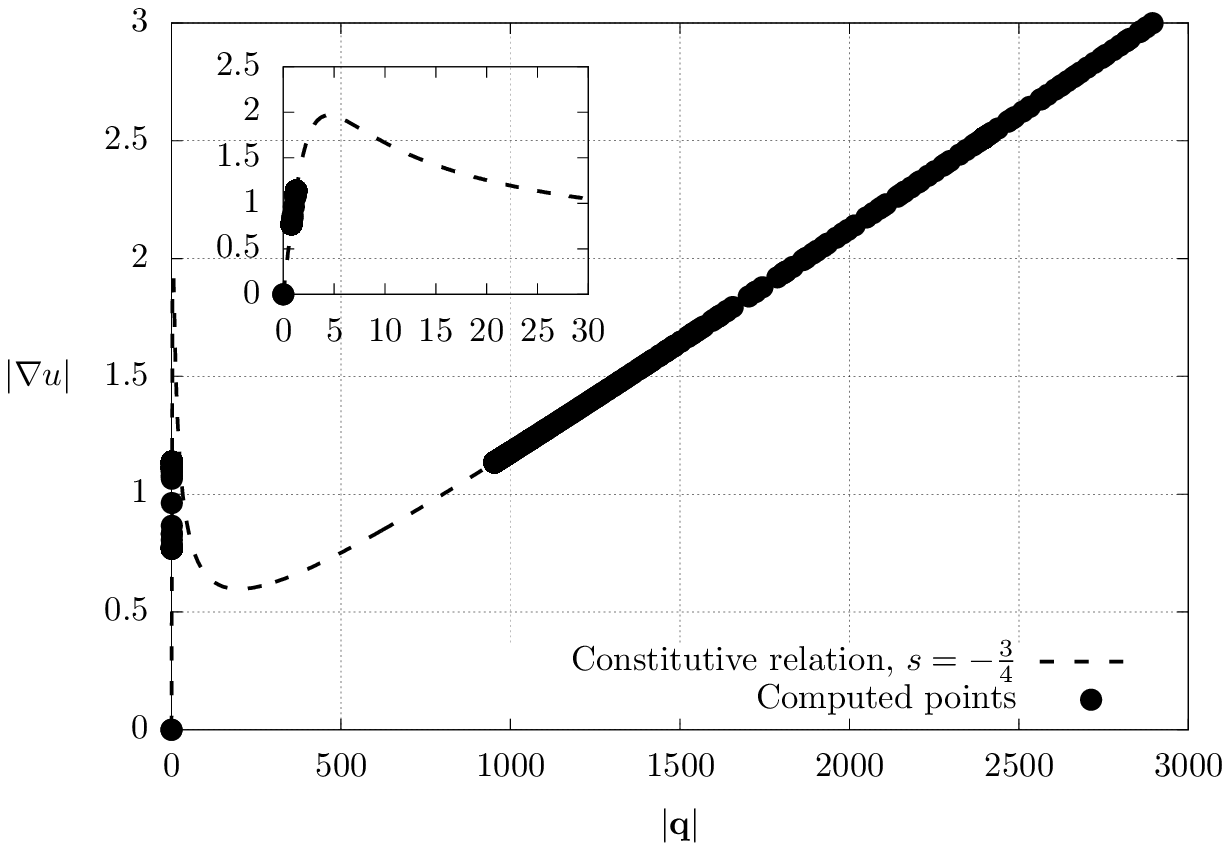}}
  \caption{Case 3. Initial condition and computed solution at $t=\Delta t$, $\vec{q}^0 = \transpose{\left[ 25 \ 0 \right]}$, homogeneous Dirichlet boundary condition.}
  \label{fig:case3}
\end{figure}

\subsubsection{Case 4: Initial condition in Region 3 and Type B boundary conditions}
\label{sec:case-4:-initial}

Now, the initial condition is the same as in Case 3, that is $\vec{q}^0 = \transpose{\begin{bmatrix} 25 & 0 \end{bmatrix}}$, hence all the flux-affinity pairs are again initially located in Region 2, see Figure~\ref{fig:case4-ic}. On the other hand, the boundary condition is now the non-homogeneous Dirichlet boundary condition. The qualitative behavior is however identical to Case 3, while the only difference is higher number of points in Region~1, see Figure~\ref{fig:case4-1it}.

\begin{figure}[h]
  \centering
  \subfloat[\label{fig:case4-ic} Initial condition. All initial flux--affinity pairs $\vec{q}$, $\nabla u$ are located in Region 2.]{\includegraphics[width=0.45\textwidth]{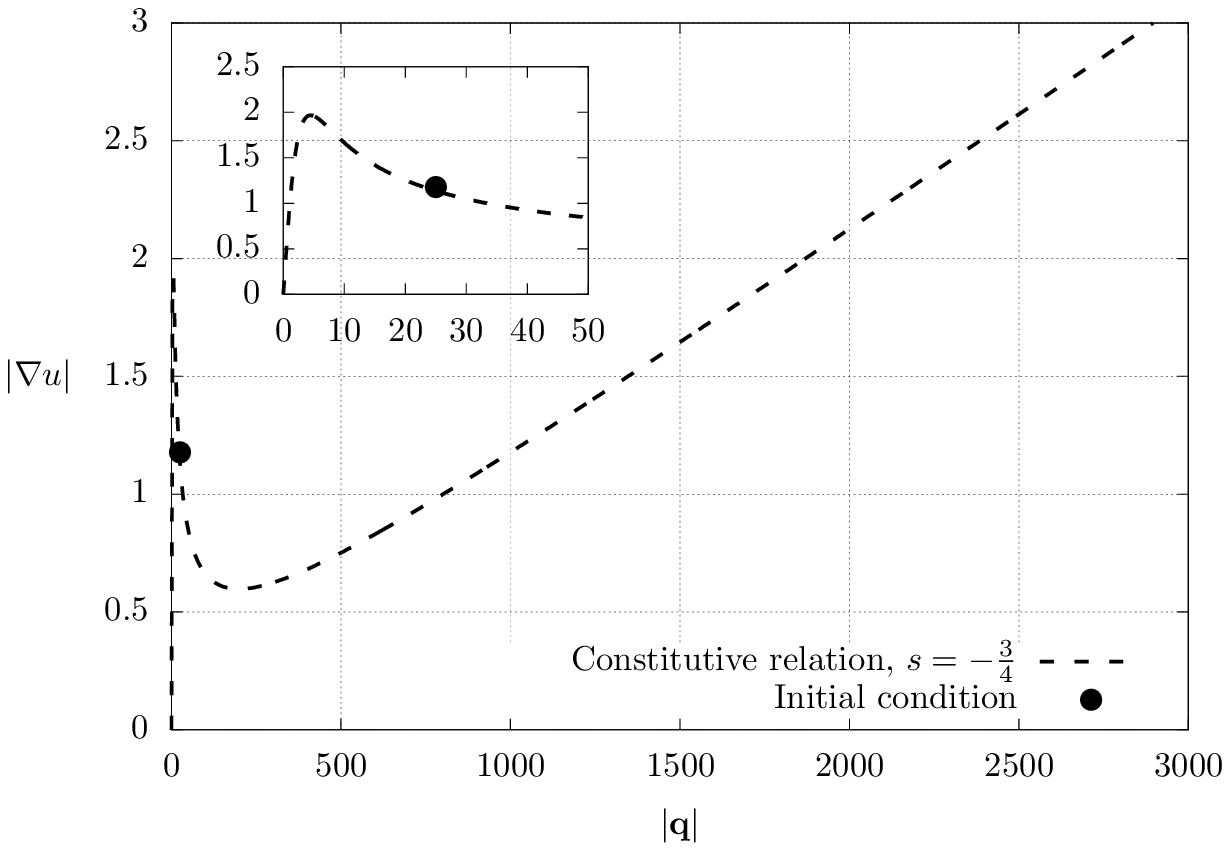}}
  \qquad
  \subfloat[\label{fig:case4-1it} Computed solution at time $t = \Delta t$.]{\includegraphics[width=0.45\textwidth]{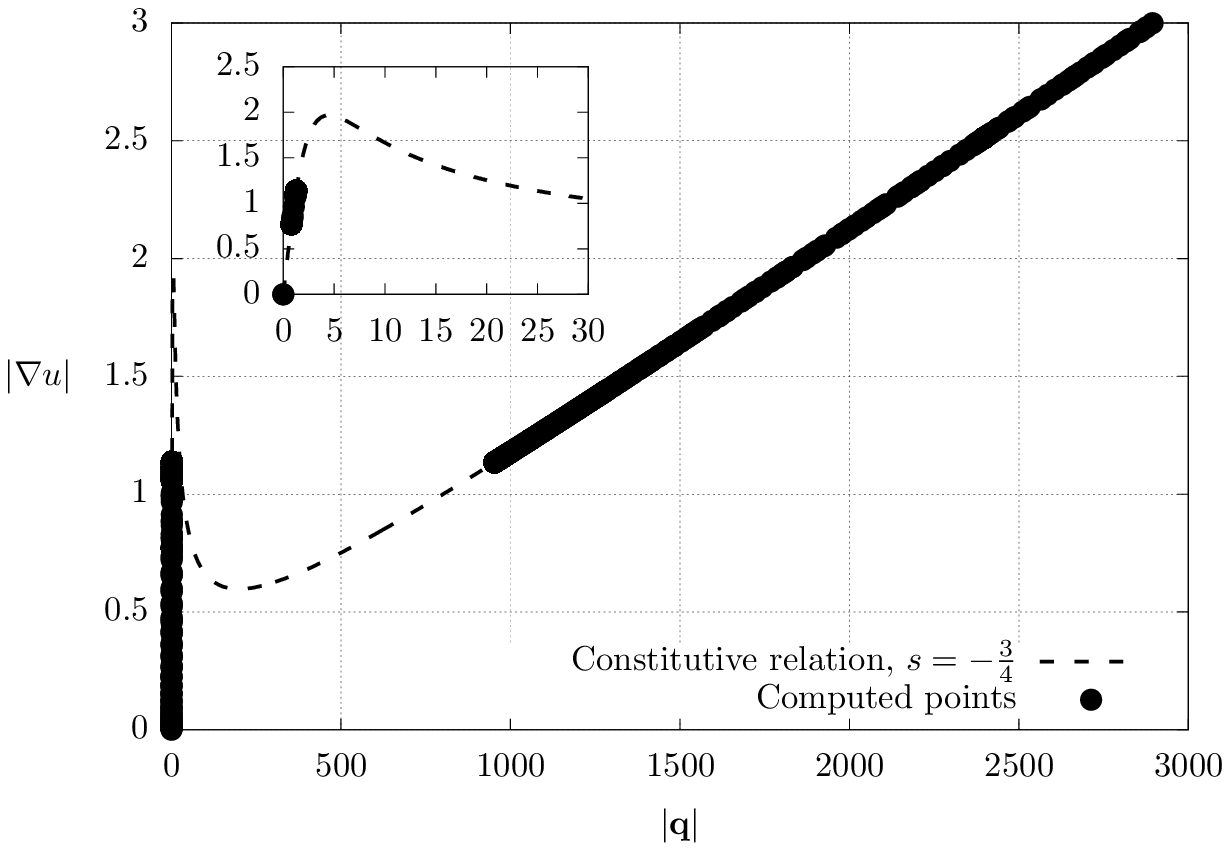}}
  \caption{Case 4. Initial condition and computed solution at $t=\Delta t$, $\vec{q}^0 = \transpose{\left[ 25 \ 0 \right]}$, non-homogeneous Dirichlet boundary condition.}
  \label{fig:case4}
\end{figure}

\subsubsection{Comments on numerical experiments}
\label{sec:conclusion-1}

We have designed simple numerical experiments that allowed us to investigate quantitative and qualitative behaviour of a system whose response is described by an implicit constitutive relation~\eqref{eq:27}). The chosen constitutive relation predicts, see Figure~\ref{fig:constitutive-relation-regions}, that once the affinity $\nabla u$ reaches the range $(a_1, a_2)$, then there exist several fluxes $\vec{q}$ such that the corresponding flux--affinity pair lies on the constitutive curve. This behaviour qualitatively corresponds to the behaviour of relation between the flux (Cauchy stress tensor, $\traceless{\cstress}$) and the affinity (symmetric part of the velocity gradient, $\gradsym$) in the case of more complex constitutive relation~\eqref{eq:9}. Apparently, such a behaviour should lead to ambiguous specification of actual flux--affinity pairs.

The numerical experiments however indicate that once the problem is solved as an evolution problem, then no ambiguity arises. The position of actual flux--affinity pairs is fully determined by the initial conditions, boundary conditions and the evolution equation for the linear momentum. In particular, it seems that no actual flux--affinity pair can over time occupy Region 2, which corresponds to unstable flux--affinity pairs. This is in agreement with the thermodynamical stability analysis given in~\cite{janecka.a.pavelka.m:non-convex}.


\section{Numerical experiments -- full problem}
\label{sec:simulations}
Using the proposed numerical scheme, we finally solve various initial--boundary value problems for the fluid described by the non-monotone implicit constitutive relation~\eqref{eq:4}. The proposed numerical scheme has been implemented in FreeFem++ software, see~\cite{freefem++hecht}, as well as in FEniCS Project software, see~\cite{logg.a.mardal.k.ea:automated} and \cite{aln-s.m.blechta.j.ea:fenics}, that are general purpose software packages for solving partial differential equations using the finite element method.

In FreeFem++ the pressure--velocity pair $(p, \vecv)$ has been approximated by the mini-element $\mathcal{P}_1 \times \mathcal{P}_1-bubble$. In FEniCS the pressure-velocity pair $(p, \vecv)$ has been approximated by the standard lowest order Taylor--Hood elements $\left( \mathcal{P}_1, \vec{\mathcal{P}}_2 \right)$, where $\mathcal{P}_k =_\bydefinition \left\{ v \in C \left( \overline{\Omega} \right): \left. v \right|_T \in P_k (T), \forall T \in \mathcal{T}_h \right\}$ is the Lagrange element of order $k$ and $\vec{\mathcal{P}}_k$ is its vectorial counterpart. For the viscosity, it is not clear how to choose the appropriate finite element function space. Since it is computed as a function of the discontinuous velocity gradient from \eqref{thetasche}, we have used, both in FEniCS and FreeFem++, the piecewise constant approximation $d \mathcal{P}_0$ as the lowest order discontinuous Lagrange element $d \mathcal{P}_k =_\bydefinition \left\{ v \in L^2 \left( \Omega \right): \left. v \right|_T \in P_k (T), \forall T \in \mathcal{T}_h \right\}$. For the temporal discretization, we have used the Crank--Nicolson method. The experimental error analysis of the proposed numerical scheme is presented elsewhere, see~\cite{malek.j.tierra.g:numerical}.

\subsection{Cylindrical Couette flow}
\label{sec:cylindrical-couette}

First, we study the behavior of a fluid described by the non-monotone constitutive relation \eqref{eq:4} in the cylindrical Couette setting. This setting provides a two-dimensional simplification of the typical experimental setting used in rheology, see for example~\cite{donnelly.rj:taylor-couette} for a historical review. In the cylindrical Couette flow problem, the fluid under investigation is confined in between two infinite concentric cylinders $\Gamma_1$ and $\Gamma_2$ of radii $R_1$ and $R_2$ respectively, $R_1 < R_2$, see Figure~\ref{fig:place-couette}, and the flow is induced by the rotation of the cylinders. 

\begin{figure}[ht]
  \centering
  \includegraphics[width=0.25\textwidth]{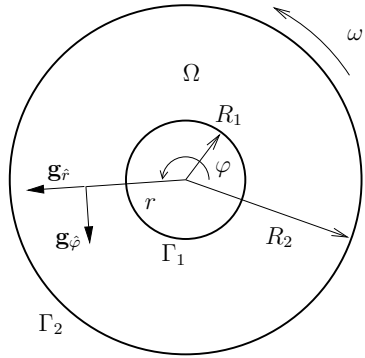}
  \caption{Cylindrical Couette flow -- problem geometry.}
  \label{fig:place-couette}
\end{figure}

In particular, we are interested in the setting where the inner cylinder is at rest and the outer cylinder rotates with a prescribed angular velocity $\omega$. This corresponds to the so-called \emph{shear-rate controlled} experiment. In this experiment, one controls the shear-rate through the control of the angular velocity\footnote{Indeed, if the gap between the cylinders is relatively small, then the shear-rate can be well approximated by the velocity difference between the cylinders, that is $\absnorm{\gradsym} \sim \frac{\vhatp(R_2) - \vhatp(R_1)}{R_2 - R_1}$.} $\omega$, and one measures the torque $\tau$ exerted by the flowing fluid on the outer cylinder. 

If the inner cylinder is at rest, then the corresponding boundary condition on the inner cylinder reads
\begin{subequations}
  \label{eq:31}
  \begin{equation}
    \label{eq:29}
    \left. \vecv \right|_{r=R_1} = \vec{0}.
  \end{equation}
  Further, if the outer one rotates with a prescribed time-dependent angular velocity $\omega$, then the velocity on the boundary is $V(t) = \omega(t) R_2$, and the corresponding boundary condition on the outer cylinder reads
  \begin{equation}
    \label{eq:28}
    \left. \vecv \right|_{r=R_2} = \omega (t) R_2 \cobvecn{\varphi},
  \end{equation}
\end{subequations}
where $\cobvecn{\varphi}$ is the azimuthal base vector in the cylindrical coordinate system, see Figure~\ref{fig:place-couette}. The second boundary condition can be further expressed in the Cartesian coordinate system as
\begin{equation}
  \label{eq:30}
  \left. \vecv \right|_{r=R_2} = \omega(t) \left( -y \bvecx + x \bvecy \right), 
\end{equation}
where~$\bvecx$ and $\bvecy$ denote the Cartesian base vectors. 

Once the velocity field is found as a solution to~\eqref{eq:5}, the torque $\tau$ acting on the outer cylinder is found using the formula
\begin{equation}
  \label{eq:32}
    \tau =_\bydefinition \int_{\Gamma_2} R_2 \cobvecn{r} \times \cstress \cobvecn{r} \,\diff l = R_2 \int_{\Gamma_2} \cobvecn{r} \times \left[ \left( \cobvecn{r} \cdot \cstress \cobvecn{r} \right) \cobvecn{r} + \left( \cobvecn{\varphi} \cdot \cstress \cobvecn{r} \right) \cobvecn{\varphi} \right] \,\diff l 
    =
    \left( R_2 \int_{\Gamma_2} \cstressc_{\hat{\varphi} \hat{r}} \,\diff l \right) \cobvecn{z},
\end{equation}
where $\{ \cobvecn{r}, \cobvecn{\varphi}, \cobvecn{z} \}$ denotes the basis in the cylindrical coordinate system, $\cstressc_{\hat{\varphi} \hat{r}}$ is the relevant component of the Cauchy stress tensor $\cstress$ and $\diff l$ is the line element. Again, we can express the torque in the Cartesian coordinate system as
\begin{equation}
  \label{eq:33}
  \tau = \left\{ \frac{1}{R_2} \int_{\Gamma_2} \left[ \left( \cstressc_{\haty\haty} - \cstressc_{\hatx\hatx} \right) xy + \cstressc_{\hatx\haty} (x^2 - y^2) \right] \,\diff l \right\} \bvecz.
\end{equation}

Concerning the angular velocity of the outer cylinder, we consider time-dependent angular velocity~$\omega(t)$ in the form
\begin{equation}
  \label{eq:imposed-omega}
  \omega(t) =
  \begin{cases}
    \omega_0 \sin \left( \pi \frac{t}{t_0} \right), & t \leq t_0, \\
    0, & t > t_0,
  \end{cases}
\end{equation}
with $\omega_0 = 0.2$ and $t_0 = 2 \times 10^{-8}$, see Figure~\ref{fig:angular-velocity}.

\begin{figure}[ht]
  \centering
  \centering
  \subfloat[\label{fig:angular-velocity} Imposed angular velocity $\omega$.]{\includegraphics[width=0.45\textwidth]{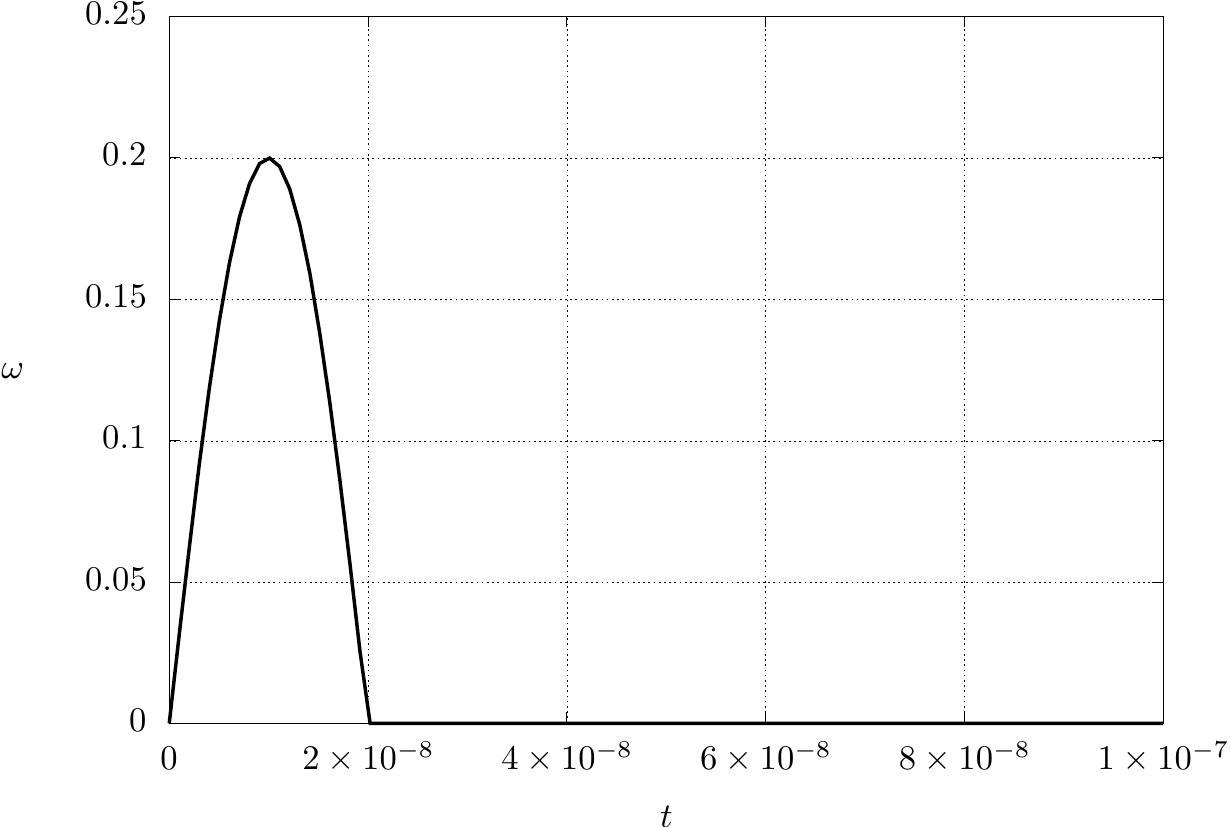}}
  \qquad
  \subfloat[\label{fig:computed-torque} Computed torque $\tau$.]{\includegraphics[width=0.45\textwidth]{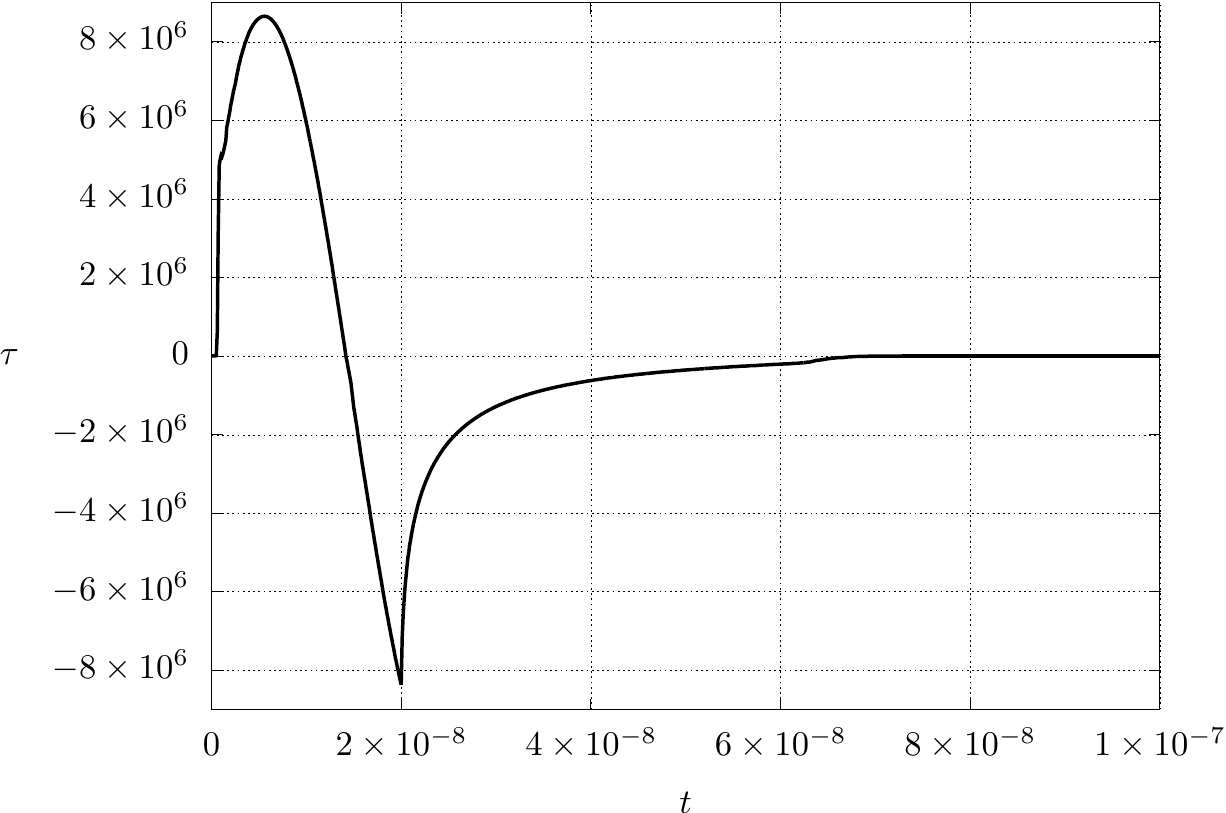}}
  \caption{Shear-rate controlled experiment. Imposed angular velocity $\omega$ versus the torque $\tau$ evaluated using~\eqref{eq:33} and the computed velocity filed.}
  \label{fig:shear-rate-controlled-results}
\end{figure}

Note that the maximal value of the angular velocity $\Omega$ is chosen in such a way that the shear-rate is expected, in certain time interval, to enter the region where the S-shaped constitutive curve, see Figure~\ref{fig:sshaped}, formally allows multiple flux--affinity (stress--shear-rate) pairs. Other material and geometrical parameters used in the numerical simulations are listed in Table~\ref{tab:cylindrical-couette-parameters}. The spatial discretisation of the computational domain contained 16984 cells with the minimum cell size $0.0125$, and maximum cell size $0.031$. Total number of degrees of freedom (DOF) for the unknown fields was $\mathrm{DOF}_{\text{velocity}} = 34356$, $\mathrm{DOF}_{\text{pressure}} = 8686$, $\mathrm{DOF}_{\text{viscosity}} = 16984$.

\begin{table}[ht]
  \centering
  \begin{tabular}{*{9}{c}}
    \toprule
    $R_1$ & $R_2$ & $\alpha$ & $\beta$ & $\gamma$ & $s$ & $\Delta t$ 
    & $\mathrm{tol}$  \\
    \midrule
    $0.3$ & $1.0$ & $1.0$ & $0.1$ & $10^{-6}$ & $-0.75$ & $10^{-10}$ 
    & $10^{-12}$  \\
    \bottomrule
  \end{tabular}
  \caption{Parameters used in the numerical experiments in the cylindrical Couette flow problem.}
  \label{tab:cylindrical-couette-parameters}  
\end{table}

The computed velocity field $\vec{v}$ and the apparent viscosity filed $\tilde{\mu}$ that correspond to the forcing induced by the imposed angular velocity $\omega$ are shown in Figure~\ref{fig:couette-velocity} and Figure~\ref{fig:couette-viscosity}. We see that the initially quiescent fluid starts to move as the angular velocity of the outer cylinder increases. The flow takes place in a thin layer close to the outer cylinder, where the apparent viscosity $\widetilde{\mu}$ is high, see Figure~\ref{fig:couette-viscosity}, and where the flux--affinity pairs, now given by $[\traceless{\cstress}, \gradsym]$, belong to Region 3 on the constitutive curve. In the remaining part of the flow domain, the flux--affinity pairs occupy Region 1 on the constitutive curve. (See Figure~\ref{fig:constitutive-relation-regions} for the notation concerning various regions on the constitutive curve.) However, the interface between low viscosity and high viscosity regions is blurry and its \emph{exact} position seems to depend on the tolerances used in the numerical method and on the quality of the mesh in the interfacial region. On the other hand, the overall ``averaged'' position of the interface seems to be quite robust with respect to the choice of parameters in the numerical method. The same also holds for the computed torque $\tau$.

Further, Figure~\ref{fig:SR_20} documents that the computed flux--affinity pairs indeed lie on the constitutive curve, and that that flux--affinity pairs never lie in Region 2 on the constitutive curve. (Recall that Region 2 corresponds to unstable flux--affinity pairs.) This is again in agreement with the thermodynamical stability analysis given in~\cite{janecka.a.pavelka.m:non-convex}.

Finally, we also plot the torque~$\tau$ acting on the outer cylinder, see Figure~\ref{fig:computed-torque}. The peak values of the torque are slightly delayed with respect to the peak values of the angular velocity, and as the angular velocity vanishes the torque also finally recovers the zero value as expected.

\begin{figure}[h]
  \centering
  \captionsetup[subfigure]{labelformat=empty}
  \subfloat[$t = 10^{-10}$]{%
    \includegraphics[width=0.18\textwidth]{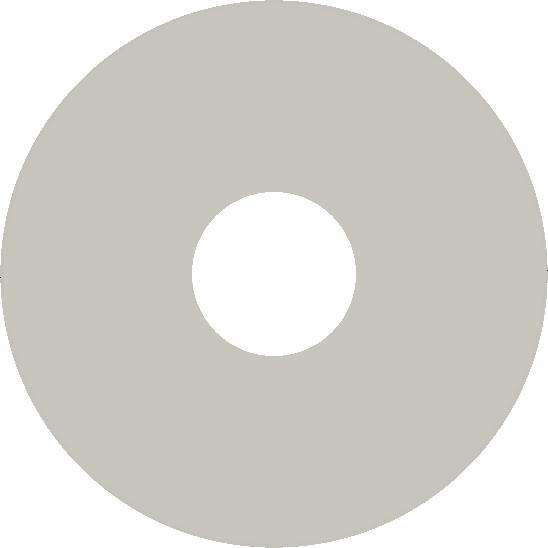}}
  \qquad
  \subfloat[$t = 10^{-8}$]{%
    \includegraphics[width=0.18\textwidth]{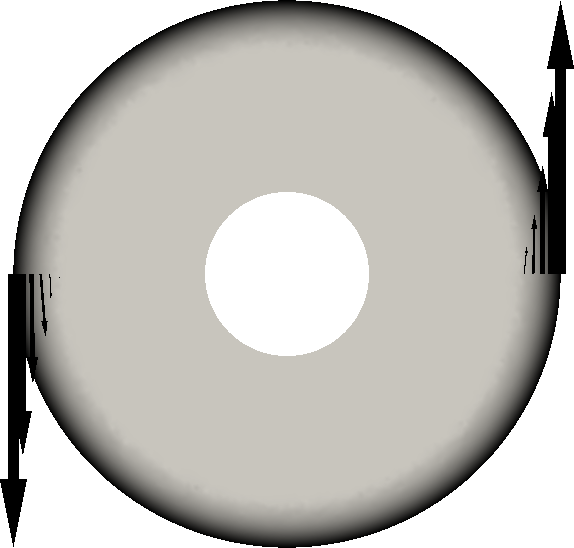}}
  \qquad
  \subfloat[$t = 2 \times 10^{-8}$]{%
    \includegraphics[width=0.18\textwidth]{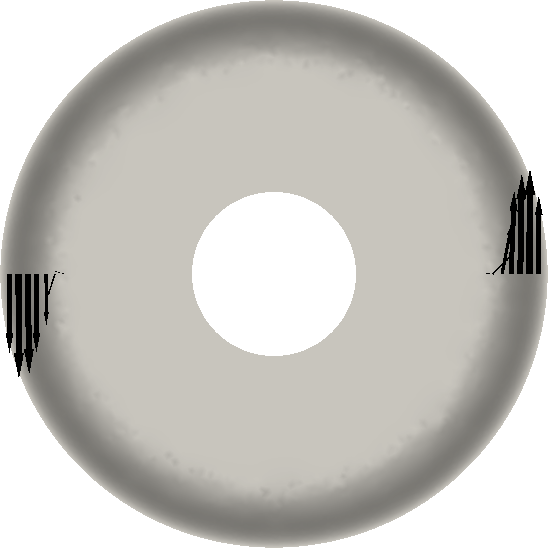}}
  \qquad
  \subfloat[$t = 3 \times 10^{-8}$]{%
    \includegraphics[width=0.18\textwidth]{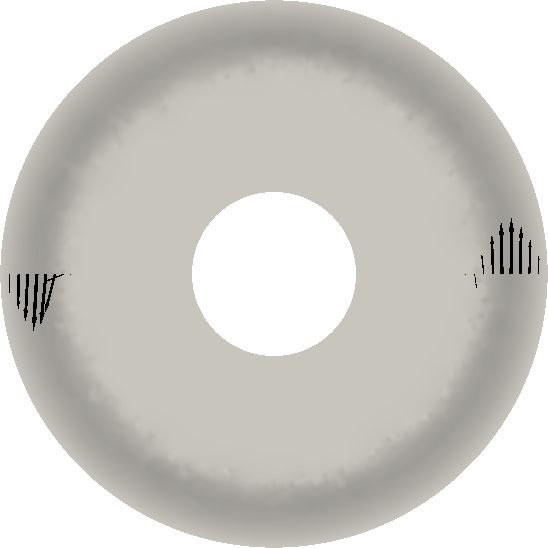}}
  \\
  \subfloat[$t = 4 \times 10^{-8}$]{%
    \includegraphics[width=0.18\textwidth]{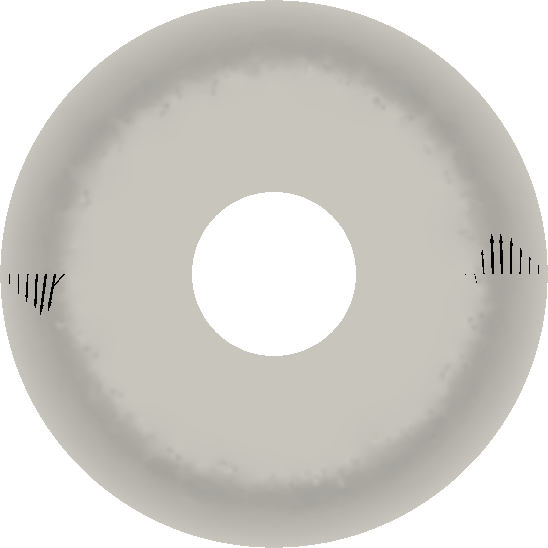}}
  \qquad
  \subfloat[$t = 6.0 \times 10^{-8}$]{%
    \includegraphics[width=0.18\textwidth]{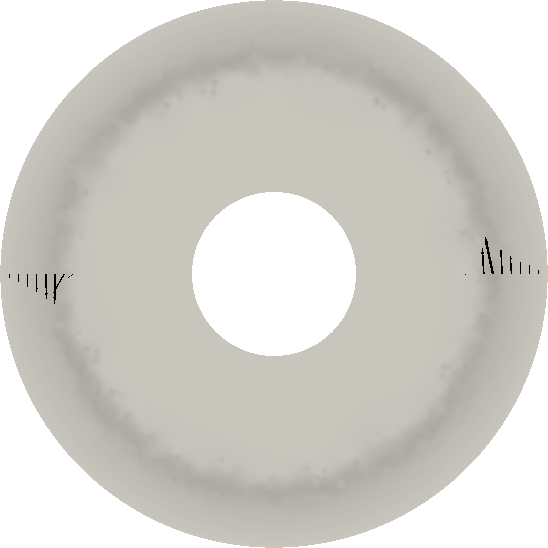}}
  \qquad
  \subfloat[$t = 6.5 \times 10^{-8}$]{%
    \includegraphics[width=0.18\textwidth]{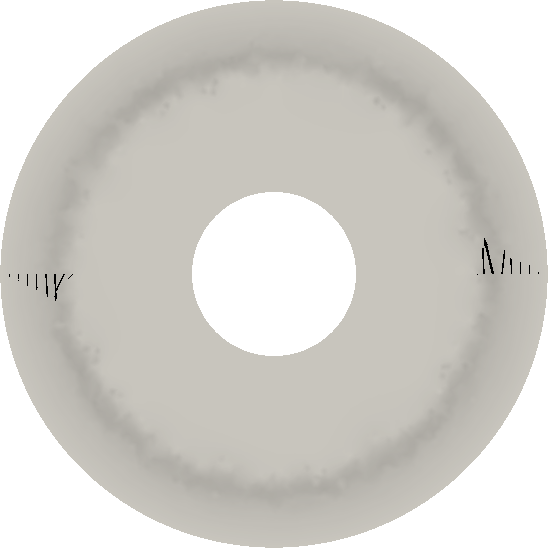}}
  \qquad
  \subfloat[$t = 7 \times 10^{-8}$]{%
    \includegraphics[width=0.18\textwidth]{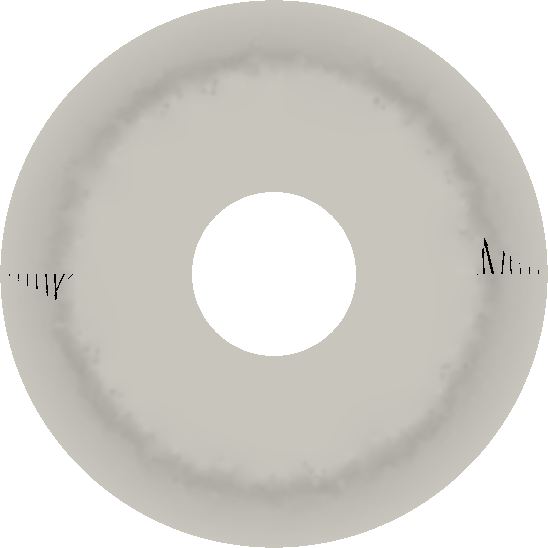}}
  \\
  \subfloat{%
    \includegraphics[width=0.18\textwidth]{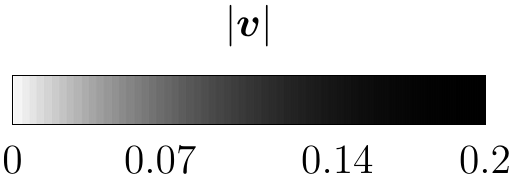}}
  \caption{Computed time evolution of the velocity field~$\vec{v}$ in the cylindrical Couette flow driven by the the imposed angular velocity $\omega$, see~\eqref{eq:imposed-omega} and~Figure~\ref{fig:angular-velocity}.}
  \label{fig:couette-velocity}
\end{figure}

\begin{figure}[h]
  \centering
  \captionsetup[subfigure]{labelformat=empty}
  \subfloat[$t = 10^{-10}$]{%
    \includegraphics[width=0.18\textwidth]{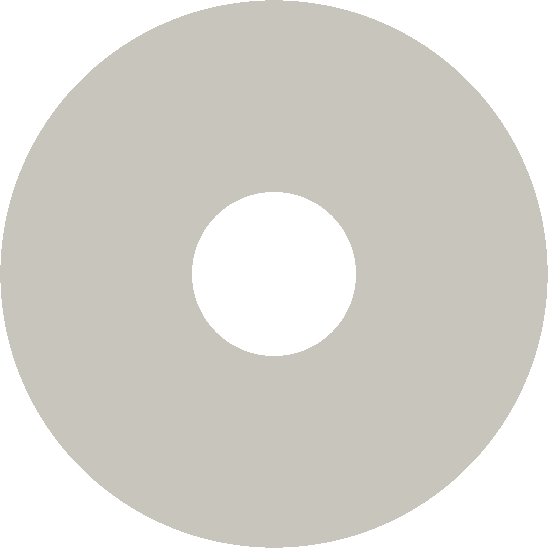}}
  \qquad
  \subfloat[$t = 10^{-8}$]{%
    \includegraphics[width=0.18\textwidth]{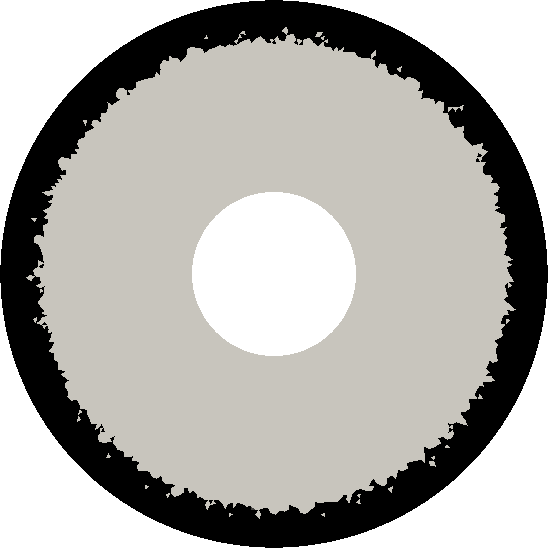}}
  \qquad
  \subfloat[$t = 2 \times 10^{-8}$]{%
    \includegraphics[width=0.18\textwidth]{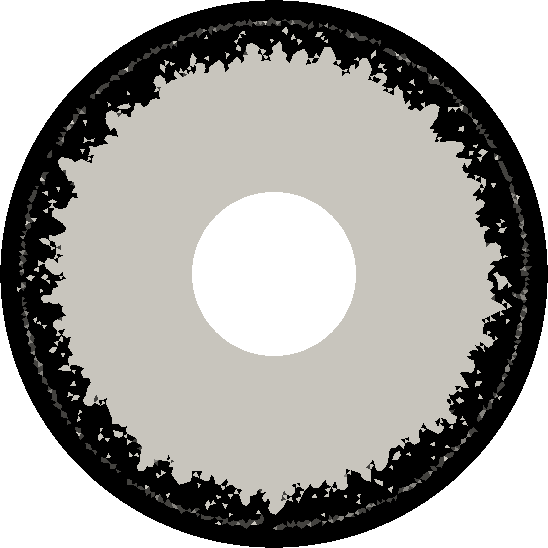}}
  \qquad
  \subfloat[$t = 3 \times 10^{-8}$]{%
    \includegraphics[width=0.18\textwidth]{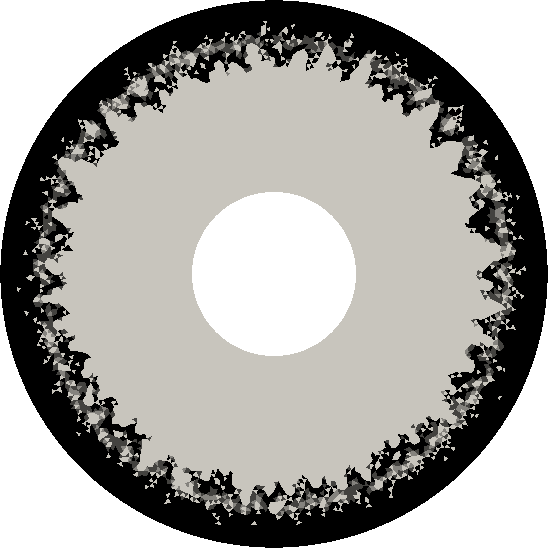}}
  \\
  \subfloat[$t = 4 \times 10^{-8}$]{%
    \includegraphics[width=0.18\textwidth]{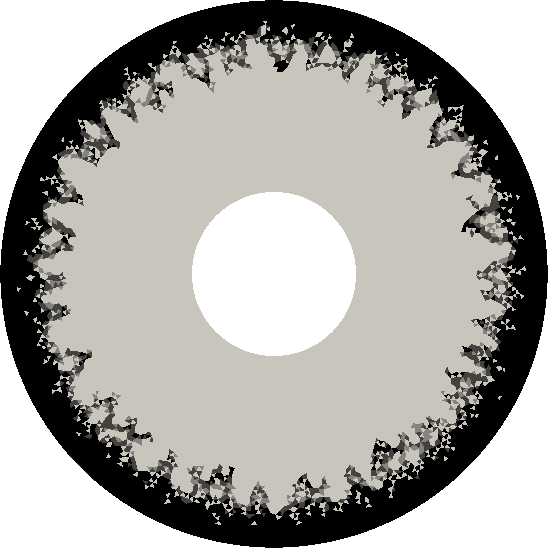}}
  \qquad
  \subfloat[$t = 6.0 \times 10^{-8}$]{%
    \includegraphics[width=0.18\textwidth]{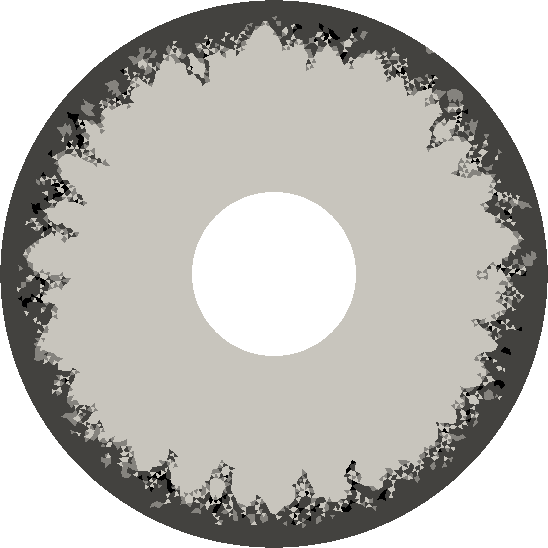}}
  \qquad
  \subfloat[$t = 6.5 \times 10^{-8}$]{%
    \includegraphics[width=0.18\textwidth]{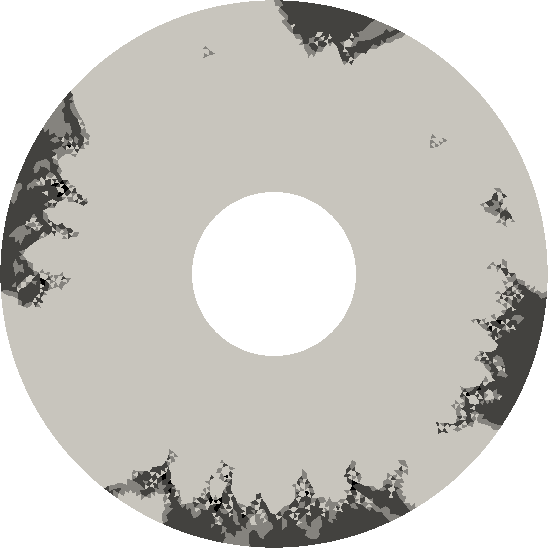}}
  \qquad
  \subfloat[$t = 7 \times 10^{-8}$]{%
    \includegraphics[width=0.18\textwidth]{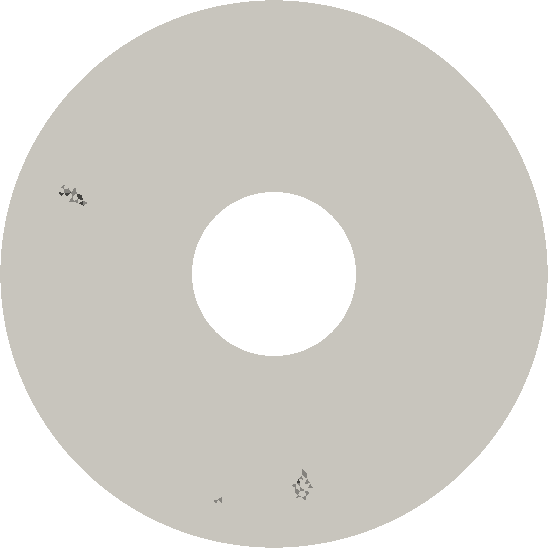}}
  \\
  \subfloat{%
    \includegraphics[width=0.18\textwidth]{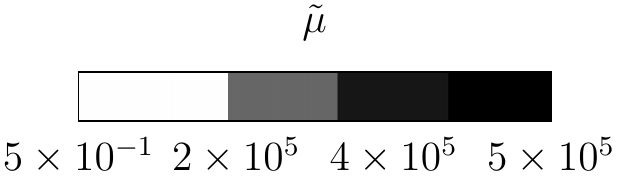}}
  \caption{Computed time evolution of the apparent viscosity $\tilde{\mu}$ in the cylindrical Couette flow driven by the the imposed angular velocity $\omega$, see~\eqref{eq:imposed-omega} and~Figure~\ref{fig:angular-velocity}.}
  \label{fig:couette-viscosity}
\end{figure}

\begin{figure}[h]
  \centering
  \includegraphics[width=0.45\textwidth]{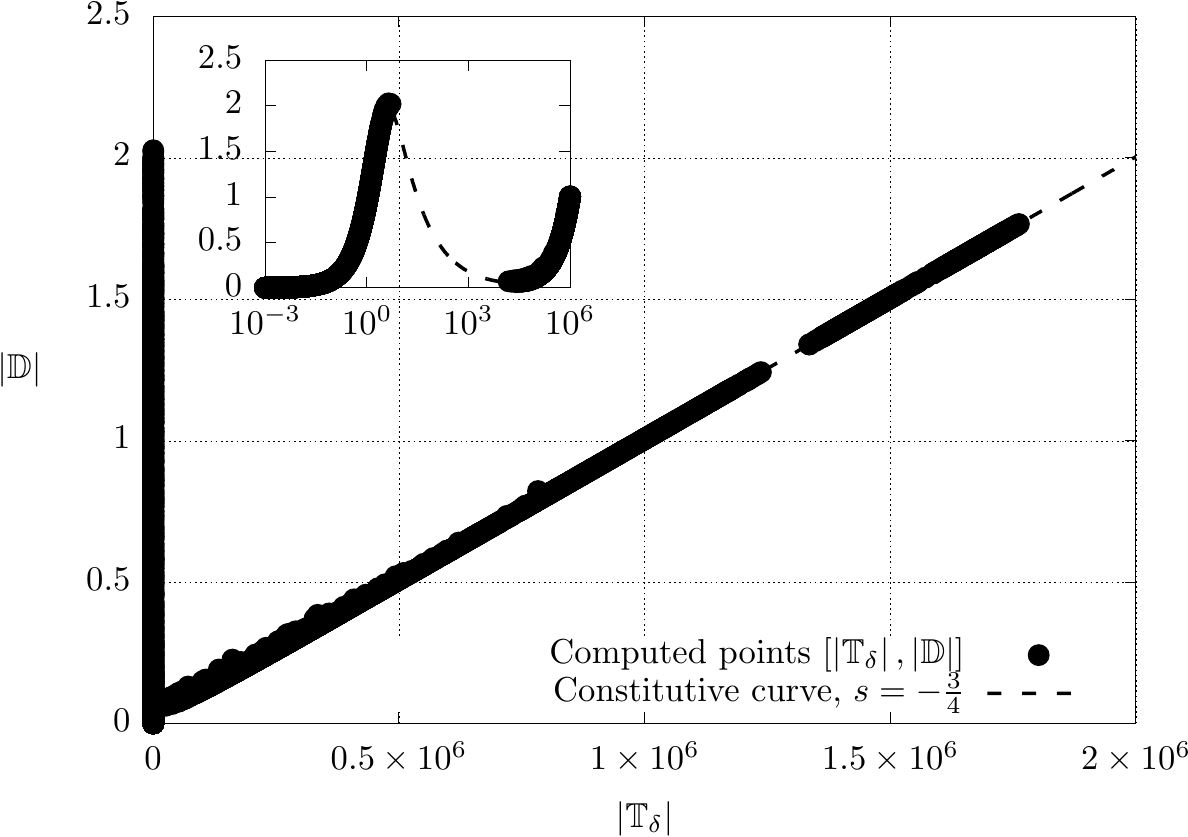}
\caption{Constitutive curve $\absnorm{\gradsym} = \left[\alpha \left(1 + \beta \absnorm{\dcstresssymb}^2 \right)^s + \gamma \right] \absnorm{\dcstresssymb}$ and the computed flux--affinity pairs $[\traceless{\cstress}, \gradsym]$ at time $t = 2 \times 10^{-8}$. Cylindrical Couette flow driven by the the imposed angular velocity $\omega$, see~\eqref{eq:imposed-omega} and~Figure~\ref{fig:angular-velocity}.}
  \label{fig:SR_20}
\end{figure}  


\subsection{Flow through a channel with a narrowing}
\label{sec:flow-through-channel}

Second, we study the behavior of a fluid described by the non-monotone constitutive relation \eqref{eq:4} in a narrowing-channel geometry. The corresponding flow has a strong extensional character, hence it provides a counterpart to the cylindrical Couette flow setting, where the flow is predominantly the shear flow.

\begin{figure}[h]
  \centering
  \includegraphics[width=0.45\textwidth]{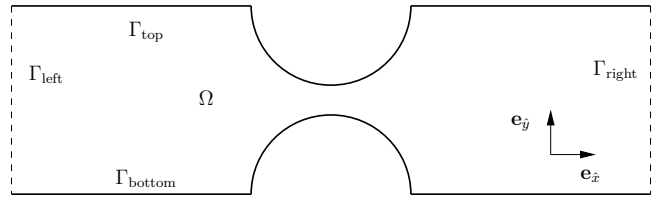}
  \caption{Narrowing channel -- problem geometry.}
  \label{fig:narrowing-channel-geometry}
\end{figure}

The domain being considered is a channel $\Omega = [0,6] \times [0,1]$ with a narrowing at $x = 3$, see Figure~\ref{fig:narrowing-channel-geometry}. The geometry of the narrowing is for $x \in [2.5, 3.5]$ given by the function $0.4 \sin \left( \pi (x - 0.5) \right)$ at the bottom wall and by the function $1 + 0.4 \sin \left( \pi (x + 0.5) \right)$ at the top wall of the channel. The initial condition is a fluid at rest
\begin{equation}
  \label{eq:34}
  \left. {\vecv} \right|_{t=0}= \vec{0}, 
\end{equation}
and we impose the following boundary conditions
\begin{subequations}
  \label{eq:35}
  \begin{align}
    \label{eq:36}
    \left. \vecv \right|_{\Gamma_{\mathrm{top}} \cup \Gamma_{\mathrm{bottom}}} &= \vec{0}, \\
    \label{eq:37}
    \left. \cstress \vec{n} \right|_{\Gamma_{\mathrm{right}}} &= \vec{0}, \\
    \label{eq:38}
    \left. \vecv \right|_{\Gamma_{\mathrm{left}}} &= \begin{bmatrix} f_0 \left( -y^2 + y \right) \\  0 \end{bmatrix} 
  \end{align}
\end{subequations}
where $\Gamma = \Gamma_{\mathrm{top}} \cup \Gamma_{\mathrm{bottom}} \cup \Gamma_{\mathrm{left}} \cup \Gamma_{\mathrm{right}} = \partial \Omega$ represents the boundary of domain $\Omega$ and $f_0>0$ is a constant. The parameters used in the numerical experiments are shown in Table~\ref{tab:narrowing-channel-parameters}. Note that the parameters in the constitutive relation are the same as that used in the cylindrical Couette setting, see Table~\ref{tab:cylindrical-couette-parameters}. The spatial discretisation of the computational domain contained 1800 cells with the minimum cell size $0.021$, and maximum cell size $0.239$. Total number of degrees of freedom (DOF) for the unknown fields was $\mathrm{DOF}_{\text{velocity}} = 3783$, $\mathrm{DOF}_{\text{pressure}} = 992$, $\mathrm{DOF}_{\text{viscosity}} = 1800$.

\begin{table}[ht]
  \centering
  \begin{tabular}{*{7}{c}}
    \toprule
    $\alpha$ & $\beta$ & $\gamma$ & $s$ & $\Delta t$ &$\mathrm{tol}$  \\
    \midrule
    $1.0 $ & $0.1$ & $10^{-6}$ & $-0.75$ & $10^{-10}$ & $10^{-5}$ \\
    \bottomrule
  \end{tabular}
  \caption{Parameters used in the numerical experiments in the narrowing channel flow problem.}
  \label{tab:narrowing-channel-parameters}
\end{table}

\begin{figure}[h]
  \centering
  \subfloat[\label{fig:narrowing-channel-viscosity-a}$f_0=0.001$]{\includegraphics[width=0.6\textwidth]{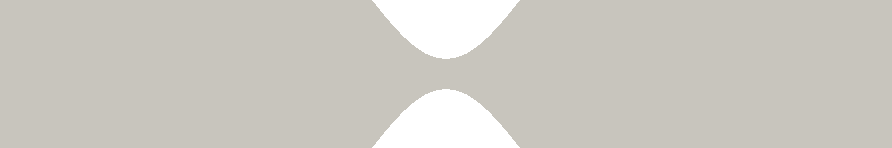}}
  \\
  \subfloat[\label{fig:narrowing-channel-viscosity-b}$f_0=0.01$]{\includegraphics[width=0.6\textwidth]{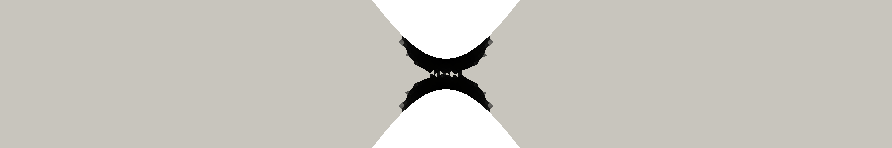}}
  \\
  \subfloat[\label{fig:narrowing-channel-viscosity-c}$f_0=1$]{\includegraphics[width=0.6\textwidth]{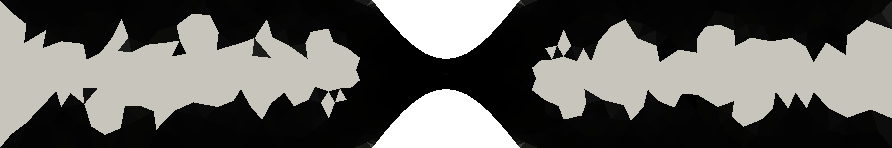}}
  \\
  \subfloat{%
    \includegraphics[width=0.18\textwidth]{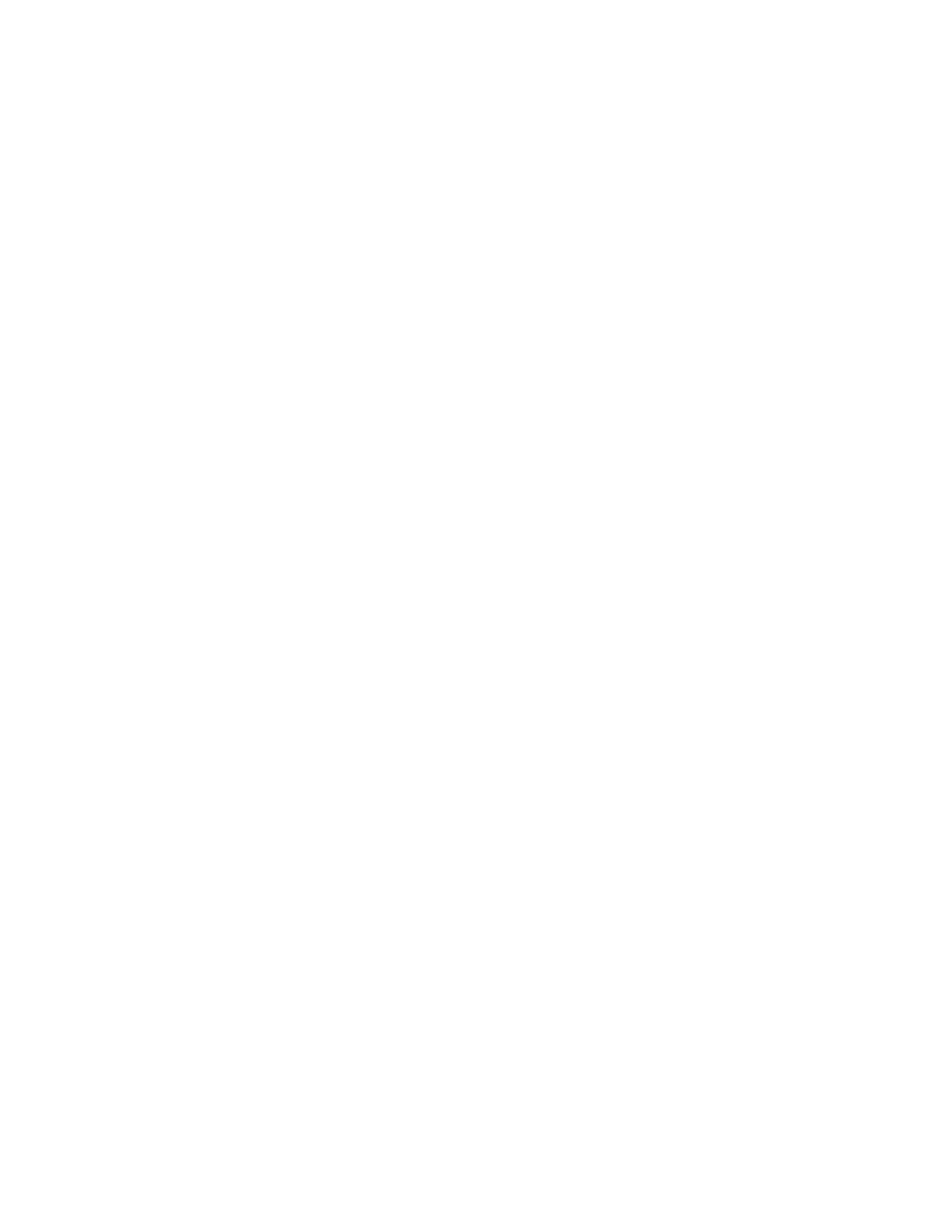}}
  \caption{Computed apparent viscosity $\tilde{\mu}$ at $t = 10^{-7}$ in the narrowing channel. Flow is driven by the imposed inlet velocity profile~\eqref{eq:38}.}
  \label{fig:narrowing-channel-viscosity}
\end{figure}

As in the case of cylindrical Couette-flow, we see that if the forcing is small, that is if $f_0 = 0.001$, see Figure~\ref{fig:narrowing-channel-viscosity-a}, then the viscosity is small, which essentially means that all $[\traceless{\cstress}, \gradsym]$ pairs occupy Region 1 on the constitutive curve. As the forcing increases, a high viscosity region starts to appear at the locations with the high values of $\gradsym$, that is in the narrowing of the channel, see Figure~\ref{fig:narrowing-channel-viscosity-b}. Finally, with a strong forcing, the high viscosity regions start to dominate the flow, see Figure~\ref{fig:narrowing-channel-viscosity-c}. The interface between the high viscosity/low viscosity region is again blurry and its detailed features depend on the tolerances in the numerical method and on the quality of the mesh. However, the numerical experiments have again shown that the overall ``averaged'' location of the interface is quite robust with respect to the choice of tolerances in the numerical method as well as on the quality of the mesh. The reader interested in the snapshots of the velocity field and the stress field is referred to~\cite{malek.j.tierra.g:numerical}, the outcomes of the current numerical experiments are qualitatively the same.

\subsection{Comments on numerical experiments}
\label{sec:conclusion-2}

We have designed simple numerical experiments that allowed us to investigate quantitative and qualitative behaviour of a system whose response is described by an implicit constitutive relation~\eqref{eq:4}. As in the case of the reduced model, see Section~\ref{sec:uq}, the solution of the full initial/boundary value problem always contains $[\traceless{\cstress}, \gradsym]$ pairs that occupy either Region~1 or Region~3 on the constitutive curve. The unstable Region 2 is---in the given settings---never occupied by the computed $[\traceless{\cstress}, \gradsym]$ pairs. This is in agreement with the findings by~\cite{janecka.a.pavelka.m:non-convex}.


\section{Conclusion}
\label{sec:conclusion}
A numerical scheme for simulation of transient flows of incompressible non-Newtonian fluids characterised by the non-monotone constitutive relation~\eqref{eq:4} has been proposed. The numerical scheme has been shown to satisfy some rudimentary properties, namely the discretised system of governing equations has been shown to posses a solution, see Section~\ref{sec:scheme}. Using the scheme, we have performed several numerical experiments. The experiments indicate that in the scenarios where the flow is forced by an imposed velocity field, and hence by the imposed shear-rate, then only a portion of the S-shaped curve in the Cauchy stress--symmetric part of the velocity gradient plot is actually active in the complex flows. In particular, the computed flux-affinity pairs $[\traceless{\cstress}, \gradsym]$ have been found to never occupy the decreasing part of the constitutive S-shaped curve, see Figure~\ref{fig:consittutive-relation-scheme-b}.

It has been observed that the flow domain usually splits into multiple regions, while different branches of the constitutive curve are active in the particular regions, see Section~\ref{sec:uq} and Section~\ref{sec:simulations} for details. It is known that such a behaviour might be tantamount to morphological changes in the microscopic constituents of the fluid, see for example~\cite{boltenhagen.p.hu.y.ea:observation} and \cite{hu.yt.boltenhagen.p.ea:shear*1}. Interestingly, such morphological changes can be visualised by various experimental techniques, see for example references in \cite{divoux.t.fardin.ma.ea:shear} and \cite{fardin.m.radulescu.o.ea:stress}, hence the predicted flow induced morphological heterogenity of the fluid is potentially verifiable in experiments.

The \emph{exact} position of the ``mushy'' interface between the high viscosity and low viscosity regions, and consequently between the different branches of the constitutive curve seems to be quite sensitive to non-physical aspects of the problem (numerical parameters). On the other hand, the experimental results also do not lead to a well specified interface as well, see for example~\cite{boltenhagen.p.hu.y.ea:observation} and \cite{hu.yt.boltenhagen.p.ea:shear*1}, the interface is always a bit blurry. An approach that would allow one to better control the position of the interface could be based on the inclusion of the stress diffusion term, which is a popular approach in the mathematical modelling of a closely related shear banding phenomenon, see for example~\cite{divoux.t.fardin.ma.ea:shear} and~\cite{malek.j.prusa.v.ea:thermodynamics}. Such an investigation is however beyond the scope of the current contribution.

Most of the arguments used in the development of the numerical scheme can be also applied to general constitutive relations of the type $\gradsym = g(\absnorm{\dcstresssymb}) \dcstresssymb$, where $g$ is a suitable scalar function, or for that matters, to \emph{any similar constitutive relation between thermodynamic fluxes and affinities}, such as heat flux/temperature gradient, diffusive flux/concentration gradient and so forth. Conceptually, constitutive relation~\eqref{eq:4} belongs to the class of \emph{implicit constitutive relations}, see \cite{rajagopal.kr:on*3,rajagopal.kr:on*4}, \cite{prusa.rajagopal.kr:on}, \cite{perlacova.t.prusa.v:tensorial}, \cite{rajagopal.kr.saccomandi.g:novel} and~\cite{fusi.l.farina.a.ea:lubrication} to name a few, which seems to be an interesting approach to the modelling of fluid response. (See also \cite{bustamante.r:some,bustamante.r.rajagopal.kr:solutions,bustamante.r.rajagopal.kr:on,bustamante.r.rajagopal.kr:implicit,bustamante.r.rajagopal.kr:implicit*1} and~\cite{gokulnath.c.saravanan.u.ea:representations} for a similar developments in the case of solids.)  The presented study opens the possibility to investigate the flows of fluids characterised by implicit constitutive relations in complicated geometries.


\bibliographystyle{chicago}
\bibliography{vit-prusa,bibliography}   

\begin{thebibliography}{}

\bibitem[\protect\citeauthoryear{Aln{\ae}s, Blechta, Hake, Johansson, Kehlet,
  Logg, Richardson, Ring, Rognes, and Wells}{Aln{\ae}s
  et~al.}{2015}]{aln-s.m.blechta.j.ea:fenics}
Aln{\ae}s, M., J.~Blechta, J.~Hake, A.~Johansson, B.~Kehlet, A.~Logg,
  C.~Richardson, J.~Ring, M.~Rognes, and G.~Wells (2015).
\newblock The {FEniCS} project version 1.5.
\newblock {\em Archive of Numerical Software\/}~{\em 3\/}(100).

\bibitem[\protect\citeauthoryear{Arnold and Logg}{Arnold and
  Logg}{2014}]{arnold.dn.logg.a:periodic}
Arnold, D.~N. and A.~Logg (2014).
\newblock Periodic table of the finite elements.
\newblock {\em SIAM News\/}~{\em 49\/}(9).

\bibitem[\protect\citeauthoryear{Boltenhagen, Hu, Matthys, and
  Pine}{Boltenhagen et~al.}{1997}]{boltenhagen.p.hu.y.ea:observation}
Boltenhagen, P., Y.~Hu, E.~F. Matthys, and D.~J. Pine (1997, Sep).
\newblock Observation of bulk phase separation and coexistence in a sheared
  micellar solution.
\newblock {\em Phys. Rev. Lett.\/}~{\em 79}, 2359--2362.

\bibitem[\protect\citeauthoryear{Brezzi and Fortin}{Brezzi and
  Fortin}{1991}]{brezzi.f.fortin.m:mixed}
Brezzi, F. and M.~Fortin (1991).
\newblock {\em Mixed and hybrid finite element methods}, Volume~15 of {\em
  Springer Series in Computational Mathematics}.
\newblock New York: Springer-Verlag.

\bibitem[\protect\citeauthoryear{Bul{\'{\i}}{\v{c}}ek, Gwiazda, M{\'a}lek,
  Rajagopal, and {\'S}wierczewska-Gwiazda}{Bul{\'{\i}}{\v{c}}ek
  et~al.}{2012}]{bulcek.m.gwiazda.p.ea:on*4}
Bul{\'{\i}}{\v{c}}ek, M., P.~Gwiazda, J.~M{\'a}lek, K.~R. Rajagopal, and
  A.~{\'S}wierczewska-Gwiazda (2012).
\newblock On flows of fluids described by an implicit constitutive equation
  characterized by a maximal monotone graph.
\newblock In {\em Mathematical aspects of fluid mechanics}, Volume 402 of {\em
  London Math. Soc. Lecture Note Ser.}, pp.\  23--51. Cambridge Univ. Press,
  Cambridge.

\bibitem[\protect\citeauthoryear{Bul\'{\i}\v{c}ek, Gwiazda, M\'alek, and
  \'Swierczewska-Gwiazda}{Bul\'{\i}\v{c}ek
  et~al.}{2009}]{bulcek.m.gwiazda.p.ea:on*2}
Bul\'{\i}\v{c}ek, M., P.~Gwiazda, J.~M\'alek, and A.~\'Swierczewska-Gwiazda
  (2009).
\newblock On steady flows of incompressible fluids with implicit power-law-like
  rheology.
\newblock {\em Adv. Calc. Var.\/}~{\em 2\/}(2), 109--136.

\bibitem[\protect\citeauthoryear{Bul\'{\i}\v{c}ek, Gwiazda, M\'alek, and
  \'Swierczewska-Gwiazda}{Bul\'{\i}\v{c}ek
  et~al.}{2012}]{bulcek.m.gwiazda.p.ea:on*3}
Bul\'{\i}\v{c}ek, M., P.~Gwiazda, J.~M\'alek, and A.~\'Swierczewska-Gwiazda
  (2012).
\newblock On unsteady flows of implicitly constituted incompressible fluids.
\newblock {\em SIAM J. Math. Anal.\/}~{\em 44\/}(4), 2756--2801.

\bibitem[\protect\citeauthoryear{Bul\'{i}\v{c}ek and M\'{a}lek}{Bul\'{i}\v{c}ek
  and M\'{a}lek}{2016}]{bulmal}
Bul\'{i}\v{c}ek, M. and J.~M\'{a}lek (2016).
\newblock On unsteady internal flows of {B}ingham fluids subject to threshold
  slip on the impermeable boundary.
\newblock In H.~Amann, Y.~Giga, H.~Kozono, H.~Okamoto, and M.~Yamazaki (Eds.),
  {\em Recent Developments of Mathematical Fluid Mechanics}, Advances in
  Mathematical Fluid Mechanics, pp.\  135--156. Basel: Birkh\"{a}user.

\bibitem[\protect\citeauthoryear{Bustamante}{Bustamante}{2009}]{bustamante.r:some}
Bustamante, R. (2009).
\newblock Some topics on a new class of elastic bodies.
\newblock {\em Proc. R. Soc. A-Math. Phys. Eng. Sci.\/}~{\em 465\/}(2105),
  1377--1392.

\bibitem[\protect\citeauthoryear{Bustamante and Rajagopal}{Bustamante and
  Rajagopal}{2011}]{bustamante.r.rajagopal.kr:solutions}
Bustamante, R. and K.~R. Rajagopal (2011).
\newblock Solutions of some simple boundary value problems within the context
  of a new class of elastic materials.
\newblock {\em Int. J. Non-Linear Mech.\/}~{\em 46\/}(2), 376--386.

\bibitem[\protect\citeauthoryear{Bustamante and Rajagopal}{Bustamante and
  Rajagopal}{2013}]{bustamante.r.rajagopal.kr:on}
Bustamante, R. and K.~R. Rajagopal (2013).
\newblock On a new class of electroelastic bodies {I}.
\newblock {\em Proc. R. Soc. A: Math. Phys. Eng. Sci.\/}~{\em 469\/}(2149).

\bibitem[\protect\citeauthoryear{Bustamante and Rajagopal}{Bustamante and
  Rajagopal}{2015}]{bustamante.r.rajagopal.kr:implicit}
Bustamante, R. and K.~R. Rajagopal (2015).
\newblock Implicit constitutive relations for nonlinear magnetoelastic bodies.
\newblock {\em Proc. R. Soc. A: Math. Phys. Eng. Sci.\/}~{\em 471\/}(2175).

\bibitem[\protect\citeauthoryear{Bustamante and Rajagopal}{Bustamante and
  Rajagopal}{2017}]{bustamante.r.rajagopal.kr:implicit*1}
Bustamante, R. and K.~R. Rajagopal (2017).
\newblock Implicit equations for thermoelastic bodies.
\newblock {\em Int. J. Non-Linear Mech.\/}~{\em 92}, 144--152.

\bibitem[\protect\citeauthoryear{David and Filip}{David and
  Filip}{2004}]{david.j.filip.p:phenomenological}
David, J. and P.~Filip (2004).
\newblock Phenomenological modelling of non-monotonous shear viscosity
  functions.
\newblock {\em Appl. Rheol.\/}~{\em 14\/}(2), 82--88.

\bibitem[\protect\citeauthoryear{Diening, Kreuzer, and S{\"u}li}{Diening
  et~al.}{2013}]{diening.l.kreuzer.c.ea:finite}
Diening, L., C.~Kreuzer, and E.~S{\"u}li (2013).
\newblock Finite element approximation of steady flows of incompressible fluids
  with implicit power-law-like rheology.
\newblock {\em SIAM J. Numer. Anal.\/}~{\em 51\/}(2), 984--1015.

\bibitem[\protect\citeauthoryear{Divoux, Fardin, Manneville, and
  Lerouge}{Divoux et~al.}{2016}]{divoux.t.fardin.ma.ea:shear}
Divoux, T., M.~A. Fardin, S.~Manneville, and S.~Lerouge (2016).
\newblock Shear banding of complex fluids.
\newblock {\em Annu. Rev. Fluid Mech.\/}~{\em 48\/}(1), 81--103.

\bibitem[\protect\citeauthoryear{Donnelly}{Donnelly}{1991}]{donnelly.rj:taylor-couette}
Donnelly, R.~J. (1991, NOV).
\newblock {T}aylor--{C}ouette flow: the early days.
\newblock {\em Phys. Today\/}~{\em 44\/}(11), 32--39.

\bibitem[\protect\citeauthoryear{Fardin, Ober, Gay, Gregoire, McKinley, and
  Lerouge}{Fardin et~al.}{2012}]{fardin.ma.ober.tj.ea:potential}
Fardin, M.~A., T.~J. Ober, C.~Gay, G.~Gregoire, G.~H. McKinley, and S.~Lerouge
  (2012).
\newblock Potential ``ways of thinking'' about the shear-banding phenomenon.
\newblock {\em Soft Matter\/}~{\em 8}, 910--922.

\bibitem[\protect\citeauthoryear{Fardin, Radulescu, Morozov, Cardoso, Browaeys,
  and Lerouge}{Fardin et~al.}{2015}]{fardin.m.radulescu.o.ea:stress}
Fardin, M.-A., O.~Radulescu, A.~Morozov, O.~Cardoso, J.~Browaeys, and
  S.~Lerouge (2015).
\newblock Stress diffusion in shear banding wormlike micelles.
\newblock {\em J. Rheol.\/}~{\em 59\/}(6), 1335--1362.

\bibitem[\protect\citeauthoryear{Fusi and Farina}{Fusi and
  Farina}{2017}]{fusi.l.farina.a:flow}
Fusi, L. and A.~Farina (2017).
\newblock Flow of a class of fluids defined via implicit constitutive equation
  down an inclined plane: {A}nalysis of the quasi-steady regime.
\newblock {\em Eur. J. Mech. B Fluids\/}~{\em 61}, 200--208.

\bibitem[\protect\citeauthoryear{Fusi, Farina, Saccomandi, and Rajagopal}{Fusi
  et~al.}{2018}]{fusi.l.farina.a.ea:lubrication}
Fusi, L., A.~Farina, G.~Saccomandi, and K.~R. Rajagopal (2018).
\newblock Lubrication approximation of flows of a special class of
  non-{N}ewtonian fluids defined by rate type constitutive equations.
\newblock {\em Appl. Math. Model.\/}~{\em 60}, 508--525.

\bibitem[\protect\citeauthoryear{Galindo-Rosales, Rubio-Hern\'andez, and
  Sevilla}{Galindo-Rosales
  et~al.}{2011}]{galindo-rosales.fj.rubio-hernandez.fj.ea:apparent}
Galindo-Rosales, F.~J., F.~J. Rubio-Hern\'andez, and A.~Sevilla (2011).
\newblock An apparent viscosity function for shear thickening fluids.
\newblock {\em J. Non-Newton. Fluid Mech.\/}~{\em 166\/}(5--6), 321--325.

\bibitem[\protect\citeauthoryear{Gokulnath, Saravanan, and Rajagopal}{Gokulnath
  et~al.}{2017}]{gokulnath.c.saravanan.u.ea:representations}
Gokulnath, C., U.~Saravanan, and K.~R. Rajagopal (2017).
\newblock Representations for implicit constitutive relations describing
  non-dissipative response of isotropic materials.
\newblock {\em Z. angew. Math. Phys.\/}~{\em 68\/}(6), 129.

\bibitem[\protect\citeauthoryear{Hecht}{Hecht}{2012}]{freefem++hecht}
Hecht, F. (2012).
\newblock New development in {F}ree{F}em++.
\newblock {\em J. Numer. Math.\/}~{\em 20\/}(3-4), 251--265.

\bibitem[\protect\citeauthoryear{Hron, M\'{a}lek, Stebel, and Tou\v{s}ka}{Hron
  et~al.}{2017}]{hron.j.malek.j.ea:novel}
Hron, J., J.~M\'{a}lek, J.~Stebel, and K.~Tou\v{s}ka (2017).
\newblock A novel view on computations of steady flows of {B}ingham fluids
  using implicit constitutive relations.
\newblock {\em J. Non-Newton. Fluid Mech.\/}.
\newblock Submitted.

\bibitem[\protect\citeauthoryear{Hu, Boltenhagen, and Pine}{Hu
  et~al.}{1998}]{hu.yt.boltenhagen.p.ea:shear*1}
Hu, Y.~T., P.~Boltenhagen, and D.~J. Pine (1998).
\newblock Shear thickening in low-concentration solutions of wormlike micelles.
  {I}. {D}irect visualization of transient behavior and phase transitions.
\newblock {\em J. Rheol.\/}~{\em 42}, 1185--1208.

\bibitem[\protect\citeauthoryear{Jane{\v{c}}ka and Pavelka}{Jane{\v{c}}ka and
  Pavelka}{2018}]{janecka.a.pavelka.m:non-convex}
Jane{\v{c}}ka, A. and M.~Pavelka (2018).
\newblock Non-convex dissipation potentials in multiscale non-equilibrium
  thermodynamics.
\newblock {\em Continuum Mech. Therm.\/}.

\bibitem[\protect\citeauthoryear{Jane\v{c}ka and Pr\r{u}\v{s}a}{Jane\v{c}ka and
  Pr\r{u}\v{s}a}{2015}]{janecka.a.prusa.v:perspectives}
Jane\v{c}ka, A. and V.~Pr\r{u}\v{s}a (2015).
\newblock Perspectives on using implicit type constitutive relations in the
  modelling of the behaviour of non-newtonian fluids.
\newblock {\em AIP Conference Proceedings\/}~{\em 1662}, --.

\bibitem[\protect\citeauthoryear{Le~Roux and Rajagopal}{Le~Roux and
  Rajagopal}{2013}]{roux.c.rajagopal.kr:shear}
Le~Roux, C. and K.~R. Rajagopal (2013).
\newblock Shear flows of a new class of power-law fluids.
\newblock {\em Appl. Math.\/}~{\em 58\/}(2), 153--177.

\bibitem[\protect\citeauthoryear{Logg, Mardal, and Wells}{Logg
  et~al.}{2012}]{logg.a.mardal.k.ea:automated}
Logg, A., K.-A. Mardal, and G.~Wells (2012).
\newblock {\em Automated solution of differential equations by the finite
  element method}, Volume~84 of {\em Lecture Notes in Computational Science and
  Engineering}.
\newblock Springer.

\bibitem[\protect\citeauthoryear{M\'alek, Pr\r{u}\v{s}a, and Rajagopal}{M\'alek
  et~al.}{2010}]{malek.j.prusa.ea:generalizations}
M\'alek, J., V.~Pr\r{u}\v{s}a, and K.~R. Rajagopal (2010).
\newblock Generalizations of the {N}avier--{S}tokes fluid from a new
  perspective.
\newblock {\em Int. J. Eng. Sci.\/}~{\em 48\/}(12), 1907--1924.

\bibitem[\protect\citeauthoryear{M\'{a}lek, Pr\r{u}\v{s}a, Sk\v{r}ivan, and
  S\"{u}li}{M\'{a}lek et~al.}{2018}]{malek.j.prusa.v.ea:thermodynamics}
M\'{a}lek, J., V.~Pr\r{u}\v{s}a, T.~Sk\v{r}ivan, and E.~S\"{u}li (2018).
\newblock Thermodynamics of viscoelastic rate-type fluids with stress
  diffusion.
\newblock {\em Phys. Fluids\/}~{\em 30\/}(2), 023101.

\bibitem[\protect\citeauthoryear{M\'alek and Tierra}{M\'alek and
  Tierra}{2015}]{malek.j.tierra.g:numerical}
M\'alek, J. and G.~Tierra (2015).
\newblock Numerical approximations for unsteady flows of incompressible fluids
  characterised by non-monotone implicit constitutive relations.
\newblock In J.~M.~D. D\'iaz~Moreno, J.~C. D\'iaz~Moreno,
  C.~Garc\'{\i}a~V\'azquez, J.~Medina~Moreno, F.~Orteg\'on~Gallego, M.~C.
  P\'erez~Mart\'{\i}nez, C.~V. Redondo~Neble, and J.~R. Rodr\'{\i}guez~Galv\'an
  (Eds.), {\em Proceedings of the {XXIV} Congress on Differential Equations and
  Applications, {XIV} Congress on Applied Mathematics}, C\'adiz, pp.\
  797--802.

\bibitem[\protect\citeauthoryear{Maringov\'a and \v{Z}abensk\'{y}}{Maringov\'a
  and \v{Z}abensk\'{y}}{2018}]{maringova.e.zabensky.j:on}
Maringov\'a, E. and J.~\v{Z}abensk\'{y} (2018).
\newblock On a {N}avier--{S}tokes--{F}ourier-like system capturing transitions
  between viscous and inviscid fluid regimes and between no-slip and
  perfect-slip boundary conditions.
\newblock {\em Nonlinear Anal.-Real World Appl.\/}~{\em 41\/}(Supplement C),
  152--178.

\bibitem[\protect\citeauthoryear{Mohankumar, Kannan, and Rajagopal}{Mohankumar
  et~al.}{2015}]{mohankumar.kv.kannan.k.ea:exact}
Mohankumar, K.~V., K.~Kannan, and K.~R. Rajagopal (2015).
\newblock Exact, approximate and numerical solutions for a variant of {S}tokes'
  first problem for a new class of non-linear fluids.
\newblock {\em Int. J. Non-Linear Mech.\/}~{\em 77}, 41--50.

\bibitem[\protect\citeauthoryear{Narayan and Rajagopal}{Narayan and
  Rajagopal}{2013}]{narayan.spa.rajagopal.kr:unsteady}
Narayan, S. P.~A. and K.~R. Rajagopal (2013).
\newblock Unsteady flows of a class of novel generalizations of the
  {N}avier--{S}tokes fluid.
\newblock {\em Appl. Math. Comput.\/}~{\em 219\/}(19), 9935--9946.

\bibitem[\protect\citeauthoryear{Perl\'acov\'a and Pr\r{u}\v{s}a}{Perl\'acov\'a
  and Pr\r{u}\v{s}a}{2015}]{perlacova.t.prusa.v:tensorial}
Perl\'acov\'a, T. and V.~Pr\r{u}\v{s}a (2015).
\newblock Tensorial implicit constitutive relations in mechanics of
  incompressible non-{N}ewtonian fluids.
\newblock {\em J. Non-Newton. Fluid Mech.\/}~{\em 216}, 13--21.

\bibitem[\protect\citeauthoryear{Pr\r{u}\v{s}a and Rajagopal}{Pr\r{u}\v{s}a and
  Rajagopal}{2012}]{prusa.rajagopal.kr:on}
Pr\r{u}\v{s}a, V. and K.~R. Rajagopal (2012).
\newblock On implicit constitutive relations for materials with fading memory.
\newblock {\em J. Non-Newton. Fluid Mech.\/}~{\em 181--182}, 22--29.

\bibitem[\protect\citeauthoryear{Rajagopal}{Rajagopal}{2003}]{rajagopal.kr:on*3}
Rajagopal, K.~R. (2003).
\newblock On implicit constitutive theories.
\newblock {\em Appl. Math.\/}~{\em 48\/}(4), 279--319.

\bibitem[\protect\citeauthoryear{Rajagopal}{Rajagopal}{2006}]{rajagopal.kr:on*4}
Rajagopal, K.~R. (2006).
\newblock On implicit constitutive theories for fluids.
\newblock {\em J. Fluid Mech.\/}~{\em 550}, 243--249.

\bibitem[\protect\citeauthoryear{Rajagopal and Saccomandi}{Rajagopal and
  Saccomandi}{2016}]{rajagopal.kr.saccomandi.g:novel}
Rajagopal, K.~R. and G.~Saccomandi (2016).
\newblock A novel approach to the description of constitutive relations.
\newblock {\em Frontiers in Materials\/}~{\em 3}, 36.

\bibitem[\protect\citeauthoryear{Srinivasan and Karra}{Srinivasan and
  Karra}{2015}]{srinivasan.s.karra.s:flow}
Srinivasan, S. and S.~Karra (2015).
\newblock Flow of "stress power-law" fluids between parallel rotating discs
  with distinct axes.
\newblock {\em Int. J. Non-Linear Mech.\/}~{\em 74}, 73--83.

\bibitem[\protect\citeauthoryear{Stebel}{Stebel}{2016}]{stebel.j:finite}
Stebel, J. (2016).
\newblock Finite element approximation of {S}tokes-like systems with implicit
  constitutive relation.
\newblock In A.~Handlovi\v{c}ov\'a, Z.~Minarechova, and D.~\v{S}ev\v{c}ovi\v{c}
  (Eds.), {\em Proceedings of the Conference ALGORITMY}, Bratislava, pp.\
  291--300. Publishing House of Slovak University of Technology.
\newblock 19th Conference on Scientific Computing, Vysoké Tatry - Podbanské,
  Slovakia, September 9 - 14, 2012.

\bibitem[\protect\citeauthoryear{S{\"u}li and Tscherpel}{S{\"u}li and
  Tscherpel}{2018}]{suli.e.tscherpel.t:fully}
S{\"u}li, E. and T.~Tscherpel (2018, April).
\newblock Fully discrete finite element approximation of unsteady flows of
  implicitly constituted incompressible fluids.
\newblock {\em ArXiv e-prints\/}.

\bibitem[\protect\citeauthoryear{Temam}{Temam}{1984}]{temam.r:navier-stokes*2}
Temam, R. (1984).
\newblock {\em Navier-{S}tokes equations\/} (Third ed.), Volume~2 of {\em
  Studies in Mathematics and its Applications}.
\newblock Amsterdam: North-Holland.
\newblock Theory and numerical analysis, With an appendix by F. Thomasset.

\end{thebibliography}

\addtocontents{toc}{\protect\end{multicols}} 
\end{document}